\newtheorem{theorem}{Theorem}
\newtheorem{lemma}[theorem]{Lemma}
\newtheorem{proposition}[theorem]{Proposition}
\newtheorem{corollary}[theorem]{Corollary}
\newtheorem{remark}{Remark}
\newcommand{\F}{\mathbb{F}}
\newenvironment{proof}{\noindent {\em Proof.}}{\hspace*{\fill} $\Box $\newline}
\title{On the Structure of the Binary LCD Codes having an Automorphism of Odd Prime
Order}
\author{Stefka Bouyuklieva\\ St. Cyril and St. Methodius University of Veliko Tarnovo, Bulgaria\\ Javier de la Cruz \\Universidad del Norte, Barranquilla, Colombia
}
\date{}
\begin{document}
\maketitle

\begin{abstract} The aim of this work is to study the structure and properties of the binary LCD codes having an automorphisms of odd prime order and to present a method for their construction.
\end{abstract}

\section{Introduction}
\label{intro}

We study the structure of the binary linear codes having an automorphism of an odd prime order $p$, and provide some necessary and sufficient conditions for such code to be a \textit{linear complementary dual} (LCD) code, i.e. the intersection of the code and its orthogonal complement should consists only of the zero vector $\mathbf{0}$.

Let $\mathbb{F}_q$ be a finite field with $q$ elements and $\mathbb{F}_q^n$ be the $n$-dimensional vector space over $\mathbb{F}_q$. The (Hamming) \textit{distance} $d(x,y)$ between two vectors $x,y\in\mathbb{F}_q^n$ is the number of coordinate positions in which they differ. The (Hamming) \textit{weight} wt$(x)$ of a vector $x\in\mathbb{F}_q^n$ is the number of its nonzero coordinates.
  A linear $[n,k,d]$ code $C$ is a $k$-dimensional subspace of the
vector space $\mathbb{F}_q^n$, and $d$ is the
smallest weight among all non-zero codewords of $C$, called the \textit{minimum weight} (or minimum distance) of the code. A matrix whose rows form a basis of $C$
is called a \textit{generator matrix} of this code. The \textit{weight enumerator}
of a code $C$ is given by the polynomial $W_C(y)=\sum_{i=0}^n A_iy^i$ where
$A_i$ is the number of codewords of weight $i$ in $C$.
In the binary case, i.e. $q=2$, two codes are equivalent if one can be obtained from the other by a permutation of coordinates.
The set of all permutations that preserve a linear code $C$ of length $n$, forms a subgroup of the symmetric group $\mathrm{Sym}(n)$, denoted by $\mathrm{Aut}(C)$ and called the automorphism group of the code.

Let $(u,v):
\mathbb{F}_q^n\times\mathbb{F}_q^n\to \mathbb{F}_q$ be an inner
product in the linear space $\mathbb{F}_q^n$. The orthogonal complement of $C$ according to the defined inner product is called the \textit{dual code} of $C$ and denoted by $C^{\perp}$, i.e. $C^{\perp}=\{u \in \mathbb{F}_q^n : (u,v)=0$ for all $v \in C\}$. Obviously, $C^{\perp}$ is a linear $[n,n-k]$ code. If $C \subseteq
C^{\perp}$, $C$ is termed \textit{self-orthogonal} and if  $C = C^{\perp}$,
$C$ is \textit{self-dual}. %A code $C$ is a \textit{linear complementary dual} (LCD) code if $C\cap C^\perp=\{0\}$.

Although the concept of a zero-dimensional intersection of a  vector space and its orthogonal complement is very natural in linear algebra, the   LCD codes over finite fields were introduced by Massey in his 1992 paper \cite{Massey}.
%and gave an optimum linear
%coding solution for the two user binary adder channel.
%The classification of LCD $[n, k]$ codes and determination of the largest minimum weight among all LCD $[n, k]$ codes are fundamental problems.
Recently, much work has been done on the study, constructions and classifications of LCD codes with or without additional restrictions over different finite fields and rings
 (see, for example,  \cite{Carlet_Pellikaan,Dougherty-Kim-Sole-LCD,Kim-LCD,Harada-LCD}). It seems, that LCD and self-dual codes are completely different classes of codes, but sometimes the same approaches can be very useful for construction of codes from both classes. The method of constructing and classifying binary self-dual codes having an automorphism of odd prime order is a very powerful tool in the theory of this class of codes \cite{Huff48,Yorgov56}. Our research is devoted to the following problem: What is the structure of the binary LCD codes having an automorphism of odd prime order $p$, and how to use this structure for construction and classification? Therefore, we first study the structure of binary linear codes invariant under an automorphism of order $p$, the corresponding structure of their dual codes, and the relationship between these structures. The work on this problem began last year and the first results were presented in \cite{ACCT2020}, but the authors there consider only primes $p$, for which 2 is a multiplicative root.

This paper is organized as follows. In Section \ref{sect:linear} we study the structure of binary linear codes having an automorphism of odd prime order. Section \ref{sect:LCD} is devoted to the linear complementary dual (LCD) codes having such automorphisms. Finally, in Section \ref{sect:aut7}, we give some construction and classification results for optimal LCD codes having an automorphism of order $p$ for $p=7$, 11, and 17.

%If a linear code $C$ is equivalent to its orthogonal complement $C^\perp$, then it is termed \textit{isodual}. A linear
%code $C$ is formally self-dual (fsd) if $C$ and its dual
%$C^{\perp}$ have the same weight enumerator. Any isodual code is also formally self-dual, but there are fsd codes that are neither isodual nor self-dual.

%\begin{theorem}{\rm  \cite{Massey}}
% Let $G$ and $H$ be a generator matrix and
%a parity-check matrix of a code $C$, respectively. Then the following properties are
%equivalent:
%\begin{itemize}
%\item[(i)] $C$ is LCD,
%\item[(ii)] $C^\perp$ is LCD,
%\item[(iii)] $GG^T$ is nonsingular,
%\item[(iv)] $HH^T$ is nonsingular.
%\end{itemize}
%\end{theorem}

%Let $d_{LCD}(n,k)$ be the maximum of possible values of $d$ among $[n,k,d$ binary LCD codes.  A linear programming bound on the largest size of $d_{LCD}(n,k)$ is given in \cite{Dougherty-Kim-Sole-LCD}. The following theorem in proven in \cite{Carlet1}.
%
%\begin{theorem}\label{bound-d}
%If $2 \le k \le n$, then $d_{LCD}(n,k) \le d_{LCD}( n,k−1)$.

%%%%%%%%%%%%%%%%%%%%%%%%%%%%%%%%%%%%%%%%%%%%%%%%%%%%%%%%%%%%%%%%%%%%%%%%%%%%%%%%%%% Section 2
\section{Binary linear codes with an automorphism of odd prime order}
\label{sect:linear}

Let $C$ be a binary linear code of length $n$ and $\sigma$ be an
automorphism of $C$ of odd prime order $p$ with $c$ independent
$p$-cycles. Without loss of generality we can assume that
\begin{equation}\label{perm}
\sigma =\Omega_1\dots\Omega_c\Omega_{c+1}\dots\Omega_{c+f}
\end{equation}
where $\Omega_{i}=((i-1)p+1,\dots,ip), i=1,\dots,c$, are the
cycles of length $p$, and $\Omega_{c+i}=(cp+i), i=1,\dots,f$, are
the fixed points. Obviously, $cp+f=n$.

Let $F_{\sigma}(C)=\{ v\in C:v\sigma=v\}$ and $E_{\sigma}(C)$=$\{
v\in C:wt(v\vert \Omega_i)\equiv 0\pmod{2},i=1,\dots,c+f\} $,
 where $v\vert\Omega_i$ is the restriction of $v$ on $\Omega_i$.
    The following lemma is a modification of the general decomposition lemma (Lemma 2) given in \cite{Huff48}. As Huffman has mentioned there, part of its statement and proof
are a special case of Maschke’s theorem.

\begin{lemma}{\rm \cite{Huff48}}\label{lemma:structure}
 The code C is a direct sum of the subcodes $F_{\sigma}(C)$ and $E_{\sigma}(C)$, which are mutually orthogonal, $F_{\sigma}(C)\perp E_{\sigma}(C)$.
\end{lemma}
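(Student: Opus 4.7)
The overall plan is to imitate Maschke's averaging trick, which normally fails in characteristic $2$, by exploiting the hypothesis that the order $p$ is odd. Specifically, I would introduce the $\mathbb{F}_2$-linear operator $T\colon C\to C$ defined by $T(v)=v+\sigma v+\sigma^{2}v+\cdots+\sigma^{p-1}v$. Since $C$ is $\sigma$-invariant, $T$ maps $C$ into $C$, and because $\sigma T=T$ we have $T(v)\in F_{\sigma}(C)$. Moreover $T^{2}=pT=T$ in $\mathbb{F}_2$ (here is where $p$ odd enters), so $T$ is a projection and $v=T(v)+(v-T(v))$ is a canonical candidate decomposition; everything then reduces to checking $v-T(v)\in E_{\sigma}(C)$.

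For that check, I would look cycle by cycle. On a $p$-cycle $\Omega_i$, if $v|_{\Omega_i}=(a_1,\dots,a_p)$ then each $\sigma^k v|_{\Omega_i}$ is a cyclic shift of this tuple, so $T(v)|_{\Omega_i}=(s,s,\dots,s)$ where $s=\mathrm{wt}(v|_{\Omega_i})\bmod 2$. Thus $(v-T(v))|_{\Omega_i}$ equals either $v|_{\Omega_i}$ (if $s=0$) or $v|_{\Omega_i}+\mathbf{1}$ (if $s=1$); in the first case its weight is even by definition of $s$, and in the second case its weight is $p-\mathrm{wt}(v|_{\Omega_i})$, even because $p$ and $\mathrm{wt}(v|_{\Omega_i})$ are both odd. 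On a fixed point $\Omega_{c+j}$, $T(v)|_{\Omega_{c+j}}=p\cdot v_{cp+j}=v_{cp+j}$, so $v-T(v)$ vanishes there, trivially of even weight. Hence $v-T(v)\in E_{\sigma}(C)$, giving $C=F_{\sigma}(C)+E_{\sigma}(C)$.

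For the intersection, observe that any $v\in F_{\sigma}(C)$ must be constant on each $p$-cycle, so $v|_{\Omega_i}$ is $\mathbf{0}$ or $\mathbf{1}$, hence its weight is $0$ or $p$; if in addition $v\in E_{\sigma}(C)$ that weight must be even, forcing $v|_{\Omega_i}=\mathbf{0}$ since $p$ is odd, while the even-weight condition on a fixed point forces $v_{cp+j}=0$ as well. So $F_{\sigma}(C)\cap E_{\sigma}(C)=\{\mathbf{0}\}$, and the sum is direct. Finally, for orthogonality, take $u\in F_{\sigma}(C)$ and $v\in E_{\sigma}(C)$ and split $(u,v)$ into contributions from each $\Omega_i$: on a $p$-cycle, $u|_{\Omega_i}$ is $\mathbf{0}$ or $\mathbf{1}$, and in either case the partial inner product equals $0$ or $\mathrm{wt}(v|_{\Omega_i})\bmod 2=0$; on each fixed point, $v$ vanishes, so the contribution is $0$. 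The main obstacle is precisely the fact that the natural averaging projection is only available because $p$ is odd; once $T$ is shown to be a well-defined projection onto $F_{\sigma}(C)$ with $v-T(v)\in E_{\sigma}(C)$, the remaining parts are short parity computations on cycles and fixed points.
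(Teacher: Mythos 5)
Your proof is correct, and it is essentially the argument the paper points to: the lemma is quoted from Huffman's paper without proof, with the remark that it is a special case of Maschke's theorem, and your averaging projection $T(v)=v+\sigma v+\cdots+\sigma^{p-1}v$ (using that $p$ is odd so $T^2=T$ over $\mathbb{F}_2$) together with the parity checks on cycles and fixed points is exactly that standard argument. Nothing is missing.
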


%\begin{proof}
%We prove this lemma for completeness.
%
%Let $v\in C$ and $w=v\sigma+v\sigma^2+\cdots+v\sigma^{p-1}$. Then
%\begin{align*}
%  (v+w)\sigma & =(v+v\sigma+v\sigma^2+\cdots+v\sigma^{p-1})\sigma \\
%   & =v\sigma+v\sigma^2+\cdots+v\sigma^{p-1}+v\\
%  &=v+w.
%\end{align*}
%
%Hence $v+w\in F_\sigma(C)$. If $v=(v_1,v_2,\dots,v_n)$, then
%$w=(w_1,\ldots,w_{cp},0,\ldots,0),$ where $$w_{pi+j}=S_{i}-v_{pi+j}, \ S_{i}=\sum_{m=ip+1}^{ip+p}v_m$$ for $i=0,\ldots,c-1$, $j=1,\ldots,p$.
%%$$w=(S_1-v_1,\dots,S_1-v_p, S_2-v_{p+1},\dots,S_2-v_{2p},\dots,
%%S_c-v_{cp-p+1},\dots,S_c-v_{cp}, 0,\dots,0),$$
%It turns out that $$\mbox{wt}(w\vert_{\Omega_i})=pS_i-S_i=(p-1)S_i\equiv 0\pmod{2}$$ and therefore $w\in E_{\sigma}(C)$. Hence $$v=(v+w)+w\in F_{\sigma}(C)+E_{\sigma}(C).$$
%
%Obviously, $F_{\sigma}(C)\cap E_{\sigma}(C)=\emptyset$, which proves that $C=F_{\sigma}(C)\oplus E_{\sigma}(C)$.
%
%If $v=(v_1,\ldots,v_n)\in E_\sigma(C)$ and $w=(w_1,\ldots,w_n)\in F_\sigma(C)$ then $w_{ip+1}=\cdots=w_{ip+p}$ and $v_{ip+1}+\cdots+v_{ip+p}\equiv 0\pmod 2$ for $i=0,\ldots,c-1$, $v_{cp+1}=\cdots=v_n=0$. Hence
%$$v\cdot w=\sum_{i=0}^{c-1}w_{ip+1}(v_{ip+1}+\cdots+v_{ip+p})=0.$$
%It turns out that the subcodes $F_{\sigma}(C)$ and $E_{\sigma}(C)$ are mutually orthogonal.
%\end{proof}

% The the following theorem holds

The subcodes from Lemma \ref{lemma:structure} are the main building blocks of the codes we consider. Later in this section we study them in more details.

As $v\in F_{\sigma}(C)$ if and only if $v\in C$ and $v$ is constant on
    each cycle, the projection map $\pi :F_{\sigma}(C)\to \F_{2}^{c+f}$, defined by
$ (v\pi)_i =v_j $ for some $ j\in\Omega_i, i=1,2,\ldots,c+f$, $v\in F_{\sigma}(C)$, is a monomorphism. It turns out that the binary codes $F_{\sigma}(C)$ and $C_{\pi}=\pi(F_{\sigma}(C))$ are isomorphic as linear spaces over $\F_2$ and therefore they have the same dimenstion $k_{\pi}=\dim C_{\pi}=\dim F_{\sigma}(C)$. Moreover,
$$x\cdot y=p\sum_{i=1}^c x_iy_i+\sum_{i=c+1}^{c+f}x_iy_i= \sum_{i=1}^c x_iy_i+\sum_{i=c+1}^{c+f}x_iy_i=\pi(x)\cdot\pi(y)$$
for any two vectors $x,y\in F_{\sigma}(C)$, $x=(\underbrace{x_1,\ldots,x_1}_p,\ldots,\underbrace{x_c,\ldots,x_c}_p,x_{c+1},\ldots,x_{c+f})$,\\
$y=(\underbrace{y_1,\ldots,y_1}_p,\ldots,\underbrace{y_c,\ldots,y_c}_p,y_{c+1},\ldots,y_{c+f})$. By $u\cdot v=\sum_{i=1}^n u_iv_i\in\F_2$ we denote the Euclidean inner product of the binary vectors $u,v\in\F_2^n$. The following lemma is very helpful for some of the proofs in Section \ref{sect:LCD}.

\begin{lemma}\label{lemma:Cpi}
If $C$ is a binary linear code invariant under the permutation $\sigma$, given in \eqref{perm}, then $\pi(F_\sigma(C)\cap F_\sigma(C)^\perp)=C_{\pi}\cap C_{\pi}^{\perp}$.
\end{lemma}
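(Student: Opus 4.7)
The plan is to prove both inclusions, using the inner product identity $x \cdot y = \pi(x) \cdot \pi(y)$ for $x,y \in F_\sigma(C)$ that was just established (a consequence of $p$ being odd, so $p \equiv 1 \pmod 2$). Since $\pi$ is a linear isomorphism of $F_\sigma(C)$ onto $C_\pi$ that preserves the (restricted) inner product, the radical of the form on $F_\sigma(C)$ must correspond under $\pi$ to the radical of the form on $C_\pi$, and each radical is exactly the intersection of the code with its dual.

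More concretely, for the inclusion $\pi(F_\sigma(C) \cap F_\sigma(C)^\perp) \subseteq C_\pi \cap C_\pi^\perp$, I would take $v \in F_\sigma(C) \cap F_\sigma(C)^\perp$, so $\pi(v) \in C_\pi$ automatically. To show $\pi(v) \in C_\pi^\perp$, pick an arbitrary $u \in C_\pi$ and write it as $u = \pi(w)$ for some $w \in F_\sigma(C)$ (using surjectivity of $\pi$ onto $C_\pi$); then the inner product identity gives $\pi(v) \cdot u = \pi(v) \cdot \pi(w) = v \cdot w = 0$, where the last equality uses $v \in F_\sigma(C)^\perp$ and $w \in F_\sigma(C)$.

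For the reverse inclusion $C_\pi \cap C_\pi^\perp \subseteq \pi(F_\sigma(C) \cap F_\sigma(C)^\perp)$, take $u \in C_\pi \cap C_\pi^\perp$ and let $v \in F_\sigma(C)$ be the unique preimage $\pi^{-1}(u)$. For any $w \in F_\sigma(C)$, the same identity gives $v \cdot w = \pi(v) \cdot \pi(w) = u \cdot \pi(w) = 0$, since $\pi(w) \in C_\pi$ and $u \in C_\pi^\perp$. Thus $v$ is orthogonal to every vector of $F_\sigma(C)$, which is exactly the condition $v \in F_\sigma(C) \cap F_\sigma(C)^\perp$.

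There is no real obstacle here: the whole proof is driven by the inner-product preservation statement already displayed in the paragraph preceding the lemma. The only small point to be careful about is the ambient-space mismatch, namely that $F_\sigma(C)^\perp$ lives inside $\F_2^n$ while $C_\pi^\perp$ lives inside $\F_2^{c+f}$; this is harmless because in both cases $C \cap C^\perp$ is intrinsic to the code, being exactly the radical of the inner product restricted to the code, and $\pi$ is an isometry between these two inner-product spaces.
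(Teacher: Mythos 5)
Your proof is correct and follows essentially the same route as the paper's: both directions are driven by the identity $x\cdot y=\pi(x)\cdot\pi(y)$ for $x,y\in F_\sigma(C)$ established just before the lemma, with $\pi$ used as an inner-product-preserving isomorphism onto $C_\pi$. Your extra remark about the ambient spaces $\F_2^n$ versus $\F_2^{c+f}$ is a reasonable clarification but does not change the argument.
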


\begin{proof}
If $x\in F_\sigma(C)\cap F_\sigma(C)^\perp$ then $\pi(x)\cdot\pi(y)=x\cdot y=0$ for any vector $y\in F_{\sigma}(C)$, so $\pi(x)\in C_{\pi}\cap C_{\pi}^{\perp}$. Conversely, let $\pi(x)\in C_{\pi}\cap C_{\pi}^{\perp}$. Then $x\cdot y=\pi(x)\cdot\pi(y)=0$ for all $y\in F_{\sigma}(C)$ and therefore $x\in F_\sigma(C)\cap F_\sigma(C)^\perp$.
\end{proof}

Denote by $E_{\sigma}(C)^{*}$ the code $E_{\sigma}(C)$ with the last $f$ coordinates deleted.
For $v\in E_{\sigma}(C)^{*}$  we
identify $v\vert\Omega_{i}=(v_{0},v_{1},\cdots ,v_{p-1})$ with the
polynomial $v_{0}+v_{1}x+\cdots +v_{p-1}x^{p-1}$ from $\mathcal{P}$,
 where $\mathcal{P}$ is the set of even-weight polynomials in
$\F_{2}[x]/(x^p-1)$. Thus we obtain the map $\varphi:E_{\sigma}(C)^{*}\rightarrow \mathcal{P}^c$. Obviously, $C_{\varphi}=\varphi (E_{\sigma}(C)^*)$ is a $\mathcal{P}$-module, and if $\mathcal{P}$ is a field then $C_{\varphi}$ is a linear code.
On $\mathcal{P}^c$, we use the Hermitian inner product, defined in
\cite{LingSoleI}, namely
\begin{equation}\label{eq:product}
\langle u,v\rangle=\sum_{j=1}^c u_j\overline{v_j},
\end{equation}
where $u=(u_1,\dots,u_c)$, $v=(v_1,v_2,\dots,v_c)\in \mathcal{P}^c$, and $\overline{v_j}=v_j(x^{-1})=v_j(x^{p-1})$.

The code $E_{\sigma}(C)^*$ is equivalent to a quasi-cyclic binary code of length $cp$ and index $c$, and we can use the decomposition given in \cite{LingSoleI}.
Let  $\mathcal{R}_p=\F_2[x]\slash (x^p-1)$ and
\begin{equation}\label{eq:h}
x^p-1=(x-1)g_1(x)\cdots g_s(x)h_1(x)h_1^*(x)\cdots h_t(x)h_t^*(x),
\end{equation}
where $f^*(x)$ is the reciprocal polynomial of $f(x)$, $g_i^*(x)=g_i(x)$, $i=0,1,\ldots,s$, $g_0(x)=x-1$, and $h_j(x)\neq h_j^*(x)$ for $j=1,\ldots,t$.
If $G_i=\F_2[x]\slash (g_i(x))$, $i=0,\ldots,s$, $H_j=\F_2[x]\slash (h_j(x))$ and $H_j^*=\F_2[x]\slash (h_j^*(x))$, $j=1,\ldots,t$, then
\begin{equation}
\mathcal{R}_p=G_0\oplus G_1\oplus\cdots\oplus G_s\oplus H_1\oplus H_1^*\oplus\cdots\oplus H_t\oplus H_t^*.
\end{equation}

If we denote by $\widehat{f(x)}$ the polynomial $\frac{x^p-1}{f(x)}$ for any divisor $f(x)$ of $x^p-1$, then $G_0,\ldots,G_s$, $H_1,\ldots,H_t$, $H_1^*,\ldots,H_t^*$ are isomorphic to the ideals $\langle \widehat{g_i(x)}\rangle$, $i=0,\ldots,s$, $\langle \widehat{h_j(x)}\rangle$ and $\langle \widehat{h_j^*(x)}\rangle$, $j=1,\ldots,s$, respectively. %We denote the generating idempotents of $G_i$ by $e_i(x)$, and the generationg idempotents of $H_j$ and $H_j^*$ by $e'_j(x)$ and $e''_j(x)$, $j=1,\ldots,t$.
Let $I_j=G_j$ for $j=0,\ldots,s$, $I_{s+j}=H_j$ and $I_{s+t+j}=H_j^*$ for $j=1,\ldots,t$. We denote the generating idempotents of $I_j$ by $e_j(x)$.
The following theorem is a modification of \cite[Theorem 4.3.8]{HP}.

%If
%\begin{equation}\label{eq:h}
%x^{p-1}+x^{p-2}+\cdots+x+1=h_1(x)\ldots h_r(x),
%\end{equation}
%where $h_1(x),\ldots,h_r(x)$ are irreducible polynomials over $\F_2$, then $\mathcal{P}=I_1\oplus\cdots\oplus I_r$, where $I_j$ is the cyclic binary code with a parity check polynomial $h_j(x)$, $j=1,\ldots,r$.
%In particular, if $r=1$ then $\mathcal{P}$ is a field with $2^{p-1}$ elements. In the general case, $\mathcal{P}$ is a ring with identity element $e=x+x^2+\cdots+x^{p-1}$.
%If $g_j(x)=\frac{x^n-1}{h_j(x)}$ then $I_j=\langle g_j(x) \rangle$, $j=1,\ldots,r$. Let furthermore $e_j(x)$ be the generator idempotent of $I_j$
%for $j = 0,1,\ldots,r$. The following theorem is a modification of \cite[Theorem 4.3.8]{HP} and \cite[Theorem ???]{Handbook}.

\begin{theorem}\label{thm:I_j} The following holds in $\mathcal{R}_p$:
\begin{itemize}
\item[(i)] The ideals $I_j$ for $0\le j\le s+2t$ are all the minimal ideals of $\mathcal{R}_p$;
\item[(ii)] $e_i(x)e_j(x)=0$ for $i\neq j$;
\item[(iii)] $e_0(x)+\cdots+e_{s+2t}(x)=1$ in $\mathcal{R}_p$;
\item[(iv)] $0$ and $e_j(x)$ are the only idempotents in $I_j$, $0\leq j\leq s+2t$.
\end{itemize}
\end{theorem}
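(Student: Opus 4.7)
The plan is to derive all four parts of the theorem from a single observation: since $p$ is odd, $\gcd(p,2)=1$, so $x^p-1$ is separable over $\F_2$ and the factorization \eqref{eq:h} consists of pairwise coprime irreducibles. The Chinese Remainder Theorem then gives the ring isomorphism already recorded in the excerpt, namely $\mathcal{R}_p\cong G_0\oplus\cdots\oplus G_s\oplus H_1\oplus H_1^*\oplus\cdots\oplus H_t\oplus H_t^*$, where every summand is a field because each $g_i(x),h_j(x),h_j^*(x)$ is irreducible. The remaining work is to transport the ideals $I_j$ and their idempotents through this isomorphism.

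First I would identify each $I_j=\langle\widehat{f_j(x)}\rangle$ (where $f_j$ is the corresponding irreducible factor) with the $j$-th summand in the CRT decomposition. The key point is that $\widehat{f_j(x)}$ is divisible by every irreducible factor of $x^p-1$ except $f_j(x)$, so it maps to $0$ in every component except the $j$-th, where its image generates the whole field. This shows that $I_j$ corresponds under the CRT isomorphism to the subring sitting in coordinate $j$ with zeros elsewhere. Part (i) then follows at once: ideals of $\mathcal{R}_p$ are in bijection with tuples of ideals in the field summands, each summand has only $\{0\}$ and itself as ideals, so the minimal nonzero ideals are precisely those supported in a single coordinate, i.e.\ the $I_j$.

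To finish, I would pin down the idempotents. The multiplicative identity of the $j$-th field summand pulls back to an idempotent in $\mathcal{R}_p$ that lies in $I_j$; since it acts as identity on $I_j$, it must equal the generating idempotent $e_j(x)$. With this identification (ii), (iii), (iv) are immediate: products $e_i(x)e_j(x)$ for $i\neq j$ vanish because the two elements have disjoint support in the CRT decomposition; the sum $\sum_j e_j(x)$ has $1$ in every coordinate, hence equals $1\in\mathcal{R}_p$; and any idempotent inside $I_j$ corresponds to an idempotent in the field $I_j$ itself, which can only be $0$ or $1$, giving exactly $0$ or $e_j(x)$. No step is a serious obstacle; the only care required is the coordinate-wise identification of $I_j$, and once that is established the four statements drop out without further computation.
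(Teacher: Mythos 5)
Your proof is correct. Note that the paper does not actually prove this statement -- it is quoted as a modification of \cite[Theorem 4.3.8]{HP} -- so there is no in-paper argument to compare against; your CRT route (separability of $x^p-1$ over $\F_2$ since $\gcd(p,2)=1$, identification of each $I_j=\langle\widehat{f_j(x)}\rangle$ with a single field coordinate, and reading off (i)--(iv) coordinatewise) is the standard proof and all four parts follow as you describe. The only point worth making explicit is that the image of $\widehat{f_j(x)}$ in the $j$-th coordinate is nonzero precisely because $x^p-1$ is squarefree, which is the separability you already invoked.
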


%
%\begin{remark}\rm It is more convenient for us to consider $G_i=\langle \widehat{g_i}(x)\rangle$, $H_j=\langle\widehat{h_j}(x)\rangle$ and $H_j^*=\langle\widehat{h_j^*}(x)\rangle$, where $\widehat{g_i}(x)=\frac{x^p-1}{g_i(x)}$, $\widehat{h_j}(x)=\frac{x^p-1}{h_j(x)}$ and $\widehat{h_j^*}(x)=\frac{x^p-1}{h_j^*(x)}$, $i=1,\ldots,s$, $j=1,\ldots,t$. Then $G_i$, $H_j$ and $H_j^*$ are the minimal ideals from Theorem \ref{thm:I_j}.
%\end{remark}

We also use another slightly different transcription of the factorization \eqref{eq:product} of $x^p-1$ into irreducible factors over $\F_2$, namely
$x^p-1=\prod_m M_{\alpha^m}(x)$, where $m$ runs through a
set of representatives of the 2-cyclotomic cosets modulo $p$, and $\alpha$ is a primitive $p$-th root of unity \cite[Theorem 4.1.1]{HP}. Furthermore, the size of each 2-cyclotomic coset is a divisor of $\mathrm{ord}_p(2)$ \cite[Theorem 4.1.1]{HP}. Since $p$ is an odd prime, all 2-cyclotomic cosets modulo $p$ have the same size. Hence all irreducible factors of $\frac{x^p-1}{x-1}$ have the same degree, namely $\mathrm{ord}_p(2)$. It turns out that $s+2t=\frac{p-1}{\mathrm{ord}_p(2)}$ and $\deg g_1=\cdots=\deg g_s=\deg h_1=\deg h_1^*=\cdots=\deg h_t=\deg h_t^*=\mathrm{ord}_p(2)$. Moreover, $I_j$ are fields with $2^{\mathrm{ord}_p(2)}$ elements, $j=1,\ldots,s+2t$.

We have to mention here that $e_j(x)=\sum_{j\in J}\sum_{i\in C_j} x^{i}$, where $J$ is some subset of representatives of 2-cyclotomic cosets modulo $p$, and $e_j(x^{-1})=e_i(x)$ for some $i$ (see \cite[Corollary 4.3.15]{HP}). If $e_i(x)$ is the nonzero idempotent of $G_i$ then $e_i(x)=u(x) \widehat{g_i}(x)$ and so $e_i(x^{-1})=u(x^{-1})\widehat{g_i}(x^{-1})=u(x^{-1})x^{-m}\widehat{g_i}(x)\in G_i$, which gives us that $e_i(x^{-1})=e_i(x)$, $i=1,\ldots,s$. If $e_{s+j}(x)$ is the nonzero idempotent of $H_j$ then $e_{s+j}(x)=u(x) \widehat{h_j}(x)$ and so $e_{s+j}(x^{-1})=u(x^{-1})\widehat{h_j}(x^{-1})=u(x^{-1})x^{-m}\widehat{h_j^*}(x)\in H_j^*$, which gives us that $e_{s+j}(x^{-1})=e_j^*(x)=e_{s+t+j}(x)$, $j=1,\ldots,t$.

Since $g_0(x)=x-1$ then $I_0=G_0\cong\F_2$, $\widehat{g_0}(x)=1+x+\cdots+x^{p-1}=e_0(x)=g_1(x)\cdots g_s(x)h_1(x)h_1^*(x)\cdots h_t(x)h_t^*(x)$ and
$$
\mathcal{P}=G_1\oplus\cdots\oplus G_s\oplus H_1\oplus H_1^*\oplus\cdots\oplus H_t\oplus H_t^*.
$$
Let $M_j=\{u=(u_1,\ldots,u_c)\in C_{\varphi} \vert \ u_i\in I_j, i=1,\ldots,c\}$, $j=1,\ldots,r$, where $r=s+2t$. Since $I_j$ is a field, $M_j$ is a linear code over $I_j$ of length $c$ and dimension $k_j\ge 0$. Then the following lemma holds.

\begin{lemma}\label{lem:Mj}
If \eqref{eq:h} is the decomposition of the polynomial $x^{p}-1$ into irreducible factors over $\F_2$ then
\begin{itemize}
\item[(i)] $C_\varphi=M_1\oplus M_2\oplus\cdots\oplus M_r$;
\item[(ii)] $k=k_{\pi}+m(k_1+\cdots+k_r)$, where $m=\mathrm{ord}_p(2)$.
\end{itemize}
\end{lemma}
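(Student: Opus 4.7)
The plan is to deduce (i) from the primary idempotent decomposition of $\mathcal{P}$ supplied by Theorem~\ref{thm:I_j}, and to derive (ii) by counting $\mathbb{F}_2$-dimensions through the chain $C\to F_\sigma(C)\oplus E_\sigma(C)\to E_\sigma(C)^*\to C_\varphi$.

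For (i), the direct sum $\mathcal{P}=I_1\oplus\cdots\oplus I_r$ immediately yields a direct sum decomposition $\mathcal{P}^c=I_1^c\oplus\cdots\oplus I_r^c$, so every $u\in C_\varphi\subseteq\mathcal{P}^c$ admits a unique expression $u=u^{(1)}+\cdots+u^{(r)}$ with $u^{(j)}\in I_j^c$. The point to verify is that each component $u^{(j)}$ already lies in $C_\varphi$; granted this, $u^{(j)}\in I_j^c\cap C_\varphi=M_j$ and the decomposition $C_\varphi=M_1\oplus\cdots\oplus M_r$ follows. Since $e_j(x)\in I_j\subseteq\mathcal{P}$ and $C_\varphi$ is a $\mathcal{P}$-module, the coordinate-wise product $e_j\cdot u$ again lies in $C_\varphi$. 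A short calculation using Theorem~\ref{thm:I_j}(ii) and (iii) shows that $e_j$ acts as the identity on $I_j$ and annihilates $I_i$ for $i\neq j$, so that $e_j\cdot u=u^{(j)}$; hence $u^{(j)}\in C_\varphi$, as required.

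For (ii), Lemma~\ref{lemma:structure} gives $k=\dim_{\mathbb{F}_2}F_\sigma(C)+\dim_{\mathbb{F}_2}E_\sigma(C)=k_\pi+\dim_{\mathbb{F}_2}E_\sigma(C)$. The even-weight condition defining $E_\sigma(C)$ forces the entry on each fixed-point cycle $\Omega_{c+i}$ to vanish, so the truncation of the last $f$ coordinates is an $\mathbb{F}_2$-isomorphism $E_\sigma(C)\to E_\sigma(C)^*$; the $\mathbb{F}_2$-linear injection $\varphi$ then gives $\dim_{\mathbb{F}_2}E_\sigma(C)^*=\dim_{\mathbb{F}_2}C_\varphi$. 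Applying (i) together with the fact that each $I_j$ (for $j\geq 1$) is a field with $2^m$ elements, so that $\dim_{\mathbb{F}_2}M_j=m k_j$, we obtain $k=k_\pi+m(k_1+\cdots+k_r)$.

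The one delicate step is identifying the ring-theoretic projection $u\mapsto e_j\cdot u$ with the vector-space projection onto the summand $I_j^c$ inside $\mathcal{P}^c$; once this is in hand, the proof reduces to invoking the $\mathcal{P}$-module structure on $C_\varphi$ and performing a routine dimension count, with the only bookkeeping subtlety being that $I_0=G_0$ is deliberately omitted from $\mathcal{P}$ because it corresponds to the all-ones polynomial and thus to odd weight.
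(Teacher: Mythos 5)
Your proof is correct and follows essentially the same route as the paper's: decompose $C_\varphi$ via multiplication by the primary idempotents $e_j$ using the $\mathcal{P}$-module structure, get directness from $I_j\cap\sum_{i\neq j}I_i=\{0\}$, and count $\mathbb{F}_2$-dimensions using $\dim_{\mathbb{F}_2}I_j=m$. You merely spell out a couple of steps the paper leaves implicit (that $e_j\cdot u$ is the projection onto $I_j^c$, and that truncating the fixed-point coordinates is injective on $E_\sigma(C)$), which is fine.
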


\begin{proof} %Since $C_{\varphi}\subset \mathcal{P}^c$ we have $k_0=0$.
We can use the decomposition of a quasi-cyclic code, given in \cite{LingSoleI} but we prefer to follow the proof of \cite[Lemma 3]{Yorus}. Denote by $M$ the module $C_{\varphi}$. Then
$$M=Me(x)=Me_1(x)+\cdots+Me_r(x)=M_1+\cdots+M_r.$$
If $u=(u_1,\ldots,u_c)\in M_j\cap\sum_{i\neq j}M_i$ then $u_1,\ldots,u_c\in I_j\cap\sum_{i\neq j}I_i=\{ 0\}$. Hence $u=\mathbf{0}$ and $M$ is a direct sum of the modules $M_1,\ldots,M_r$. This gives us that
$$\dim_{\F_2} M=\sum_{j=1}^r\dim_{\F_2} M_j=\sum_{j=1}^r\dim_{I_j} M_j\dim_{\F_2} I_j=m\sum_{j=1}^r\dim_{I_j} M_j.$$
Since $\dim_{\F_2} M=\dim E_{\sigma}(C)$, then $\dim C=\dim F_{\sigma}(C)+\dim E_{\sigma}(C)=k_{\pi}+m\sum_{j=1}^rk_j$.
\end{proof}

%We will denote the codes $M_2,\ldots,M_r$ also by $C_2,\ldots,C_s$, $C'_1,\ldots,C'_t$, $C''_1,\ldots,C''_t$, when necessary, where $C_i$, $C'_j$ and $C''_j$ are the codes over the fields $G_i$, $H_j$ and $H_j^*$, respectively, $i=2,\ldots,s$, $j=1,\ldots,t$.

Since $\mathrm{Aut}(C)=\mathrm{Aut}(C^\perp)$, the automorphism $\sigma\in\mathrm{Aut}(C)$ is also an automorphism of the dual code $C^\perp$. It turns out that $C^\perp=F_{\sigma}(C^\perp)\oplus E_{\sigma}(C^\perp)$. Denote $\pi(F_{\sigma}(C^\perp))$ by $\widehat{C_\pi}$, and $\varphi(E_{\sigma}((C^\perp)^*))$ by $\widehat{C_\varphi}$. The following theorem presents the relationship between the considered subcodes of $C$ and $C^{\perp}$.

\begin{theorem}\label{thm:C-Cperp}
 If $C$ is a binary linear code invariant under the permutation $\sigma$, given in \eqref{perm}, then $C_{\pi}^{\perp} =\widehat{C_\pi}$ and $C_{\varphi}^{\perp} = \widehat{C_\varphi}$.
\end{theorem}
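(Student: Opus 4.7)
The plan is to exploit a single structural observation: because $p$ is odd, $\F_2^n$ decomposes as the \emph{orthogonal} direct sum $\F_2^n=F_\sigma(\F_2^n)\oplus E_\sigma(\F_2^n)$. Indeed, the indicator vector of any $\Omega_i$ has odd weight ($p$ or $1$), so the intersection is trivial, and each such indicator is orthogonal to every vector of even weight on each $\Omega_i$. Applying Lemma \ref{lemma:structure} to $C^\perp$ (which is $\sigma$-invariant because $\mathrm{Aut}(C)=\mathrm{Aut}(C^\perp)$) and using that any element of $F_\sigma(C^\perp)\subseteq F_\sigma(\F_2^n)$ is automatically orthogonal to any element of $E_\sigma(C)\subseteq E_\sigma(\F_2^n)$, I would first derive
\[
F_\sigma(C^\perp)=F_\sigma(\F_2^n)\cap F_\sigma(C)^\perp,\qquad E_\sigma(C^\perp)=E_\sigma(\F_2^n)\cap E_\sigma(C)^\perp.
\]

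For the first equality of the theorem, I would then push the identity for $F_\sigma(C^\perp)$ through $\pi$. The computation preceding Lemma \ref{lemma:Cpi} already shows that $\pi\colon F_\sigma(\F_2^n)\to\F_2^{c+f}$ is an inner-product preserving isomorphism, so it carries orthogonal complements inside $F_\sigma(\F_2^n)$ to orthogonal complements inside $\F_2^{c+f}$, yielding $\widehat{C_\pi}=C_\pi^\perp$ immediately.

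For the second equality I need the analogous bridge between the Hermitian inner product on $\mathcal{P}^c$ and the Euclidean inner product on $\F_2^{cp}$. A short expansion shows that the coefficient of $x^k$ in $\langle u,v\rangle=\sum_{j=1}^{c}u_j\overline{v_j}$ equals the Euclidean inner product of the componentwise cyclic shift of $u$ by $k$ with $v$; in particular the constant term recovers the ordinary dot product in $\F_2^{cp}$. Because $C_\varphi$ is closed under multiplication by $x$ (that is, $E_\sigma(C)^*$ is stable under the cyclic shift inside each $\Omega_i$), a vector $\bar u\in\mathcal{P}^c$ lies in $C_\varphi^\perp$ in the Hermitian sense iff the associated $u\in\mathcal{P}^c\subseteq\F_2^{cp}$ is Euclidean-orthogonal to every element of $E_\sigma(C)^*$. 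By the identity $E_\sigma(C^\perp)=E_\sigma(\F_2^n)\cap E_\sigma(C)^\perp$ from the first paragraph, the latter condition is exactly $u\in E_\sigma(C^\perp)^*$, and applying $\varphi$ gives $\widehat{C_\varphi}=C_\varphi^\perp$.

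The hardest step, I expect, will be the third paragraph: carefully converting a single polynomial identity $\langle\bar u,\bar v\rangle=0$ over $\mathcal{P}$ into a full family of Euclidean orthogonality relations indexed by cyclic shifts, and then combining this with the $\mathcal{P}$-module structure of $C_\varphi$ and the orthogonal decomposition of $\F_2^n$ so that orthogonality to all of $C$ collapses to orthogonality to $E_\sigma(C)$ alone.
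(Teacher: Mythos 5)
Your proposal is correct and rests on the same computational core as the paper's proof: the inner-product preservation of $\pi$, the expansion of the Hermitian product $\sum_j u_j\overline{v_j}$ into the family of cyclic-shift Euclidean products, and the automatic orthogonality of cycle-constant vectors to vectors of even weight on each cycle (using that $p$ is odd). The only difference is organizational — you factor these facts through the ambient orthogonal decomposition $\F_2^n=F_\sigma(\F_2^n)\oplus E_\sigma(\F_2^n)$ and the identities $F_\sigma(C^\perp)=F_\sigma(\F_2^n)\cap F_\sigma(C)^\perp$, $E_\sigma(C^\perp)=E_\sigma(\F_2^n)\cap E_\sigma(C)^\perp$, whereas the paper runs the two inclusions directly — so this is essentially the same argument, slightly more cleanly packaged.
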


\begin{proof}
Take $u\in C_\pi$ and $v\in \widehat{C_\pi}$. Then $\pi^{-1}(u)\in C$ and $\pi^{-1}(v)\in C^\perp$, therefore
$u\cdot v=\pi^{-1}(u)\cdot\pi^{-1}(v)=0$.
It follows that $v\in C^{\perp}_\pi$ and so $\widehat{C_\pi}\subseteq C_{\pi}^{\perp}$.

If $w\in C_{\pi}^{\perp}$ then
$0=u\cdot w=\pi^{-1}(u)\cdot\pi^{-1}(w)$.
Hence $\pi^{-1}(w)\perp F_{\sigma}(C)$. If $y\in E_{\sigma}(C)$ then
$$y\cdot\pi^{-1}(w)=\sum_{i=1}^c w_i(y_{(i-1)p+1}+y_{(i-1)p+2}+\cdots+y_{ip})\equiv 0\pmod 2$$
and so $\pi^{-1}(w)\perp E_{\sigma}(C)$. Hence $\pi^{-1}(w)\perp C$, which proves that $\pi^{-1}(w)\in C^\perp$ and $w\in \pi(F_{\sigma}(C^\perp))=\widehat{C_\pi}$. It follows that $C_{\pi}^{\perp} \subseteq \widehat{C_\pi}$, and therefore $C_{\pi}^{\perp} = \widehat{C_\pi}$.

 Now consider the codes $C_{\varphi}$ and $\widehat{C_{\varphi}}$. If $a(x),b(x)\in \mathcal{P}$ then
 \begin{align*}
   a(x)b(x^{-1}) & =(a_0+a_1x+\cdots+a_{p-1}x^{p-1})(b_0+b_1x^{p-1}+\cdots+b_{p-1}x) \\
    & = a\cdot b + (a\cdot\sigma(b))x + \cdots +
(a\cdot\sigma^{p-1}(b))x^{p-1},
 \end{align*}
  where $a\cdot b=\sum_{i=0}^{p-1}
a_ib_i$ is the inner product of the binary vectors $a$ and $b$.
If $v=(v_1(x),\ldots,v_c(x))$ and $w=(w_1(x),\ldots,w_c(x))$ are vectors in $\mathcal{P}^c$, then
$$
  \sum_{i=1}^c v_i(x)w_i(x^{-1})  =\sum_{i=1}^c v_i\cdot w_i +
(\sum_{i=1}^c v_i\cdot\sigma(w_i))x + \cdots \\
   + (\sum_{i=1}^c v_i\cdot\sigma^{p-1}(w_i))x^{p-1}.
$$

   Let $v\in C_\varphi$, $w\in \widehat{C_\varphi}$, $v'=\varphi^{-1}(v)\in E^*_\sigma(C)$, and $w'=\varphi^{-1}(w)\in E^*_\sigma(C^\perp)$. Then $w'$, $\sigma(w'),\ldots, \sigma^{p-1}(w')\in E^*_\sigma(C^\perp)$, and so
   $$v'\cdot w'=v'\cdot\sigma(w')=\cdots=v'\cdot \sigma^{p-1}(w')=0.$$
   If $v'=(v_{11},\dots,v_{1p},\dots,v_{c1},\dots,v_{cp})$, $w'=(w_{11},\dots,w_{1p},\dots,w_{c1},\dots,w_{cp})$, $v_i=(v_{i1},\dots,v_{ip})$, and $w_i=(w_{i1},\dots,w_{ip})$ then
\begin{eqnarray*}
% \nonumber to remove numbering (before each equation)
  v'\cdot w' &=& \sum_{i=1}^c v_i\cdot w_i=0, \\
  v'\cdot \sigma(w') &=& \sum_{i=1}^c v_i\cdot\sigma(w_i)=0, \\
  \vdots &=& \vdots \\
  v'\cdot \sigma^{p-1}(w') &=& \sum_{i=1}^c v_i\cdot\sigma^{p-1}(w_i)=0.
\end{eqnarray*}
   It turns out that $$\langle v,w\rangle=
  \sum_{i=1}^c v_i(x)w_i(x^{-1})  =\sum_{i=1}^c v_i\cdot w_i +
(\sum_{i=1}^c v_i\cdot\sigma(w_i))x + \cdots \\
   + (\sum_{i=1}^c v_i\cdot\sigma^{p-1}(w_i))x^{p-1}=0.
$$
This proves that $w\in C^{\perp}_\varphi$ and so $\widehat{C_\varphi}\subseteq C_{\varphi}^{\perp}$.

If $w\in C_{\varphi}^{\perp}$ then
$$\sum_{i=1}^c w_i(x)v_i(x^{-1}) =\sum_{i=1}^c w_i\cdot v_i +
(\sum_{i=1}^c w_i\cdot\sigma(v_i))x + \cdots + (\sum_{i=1}^c w_i\cdot\sigma^{p-1}(v_i))x^{p-1}=0.$$
It follows that
$$\sum_{i=1}^c w_i\cdot v_i=\sum_{i=1}^c
w_i\cdot\sigma(v_i)=\cdots=\sum_{i=1}^c
w_i\cdot\sigma^{p-1}(v_i)=0.$$
Therefore, if $w'=(w_1,\ldots,w_c,\underbrace{0,\ldots,0}_f)$ then $w'\perp E_\sigma(C)$ and therefore $w'\in C^\perp$. It  turns out that $w'\perp E_\sigma(C^\perp)$ and $w\in \widehat{C_{\varphi}}$.
This gives us that $C_{\varphi}^{\perp} \subseteq \widehat{C_\varphi}$, hence $C_{\varphi}^{\perp} = \widehat{C_\varphi}$.
\end{proof}
%%%%%%%%%%%%%%%%%%%%%%%%%%%%%%%%%%%%%%%%%%%%%%%%%%%%%%%%%%%

Let $\widehat{C_\varphi}=\widehat{M_1}\oplus\cdots\oplus \widehat{M_r}$, $\dim_{I_j} \widehat{M_j}=\widehat{k_j}$, $j=1,\ldots,r$. Then for the dimensions of the involved codes we have
$$\dim C_{\pi}^{\perp}=c+f-k_{\pi}, \ n-k=cp+f-k=c+f-k_{\pi}+m(\widehat{k_1}+\cdots+\widehat{k_r})$$
$$\Rightarrow c(p-1)-k+k_{\pi}=m(\widehat{k_1}+\cdots+\widehat{k_r})$$
$$\Rightarrow c(p-1)-m(k_1+\cdots+k_r)=m(\widehat{k_1}+\cdots+\widehat{k_r})$$
$$\Rightarrow c\frac{p-1}{m}=k_1+\cdots+k_r+\widehat{k_1}+\cdots+\widehat{k_r}$$

For the following assertions, it is more convenient for us to denote the codes $M_{s+j}$ by $M'_j$ and $M_{s+t+j}$ by $M''_j$ for $j=1,\ldots,t$, respectively.  %We will denote the codes $M_2,\ldots,M_r$ also by $C_2,\ldots,C_s$, $C'_1,\ldots,C'_t$, $C''_1,\ldots,C''_t$, when necessary, where $C_i$, $C'_j$ and $C''_j$ are the codes over the fields $G_i$, $H_j$ and $H_j^*$, respectively, $i=2,\ldots,s$, $j=1,\ldots,t$.For the

The considered inner product in $\mathcal{P}^c$ defines an inner product in the linear space $G_i^c$ over the field $G_i$, $i=1,\ldots,s$, but it's not the same for the linear spaces $H_j^c$ and $(H_j^*)^c$. Therefore, for the code $M'_j$, $j=1,\ldots,t$, by $(M'_j)^\perp$ we denote the following linear code over the field $H_j^*$: $(M'_j)^\perp=\{w\in (H_j^*)^c, \langle v,w \rangle=0 \ \forall v\in M'_j\}$. Similarly,
$(M''_j)^\perp=\{w\in H_j^c, \langle v,w \rangle=0 \ \forall v\in M''_j\}$.

\begin{lemma}
The following holds for the subcodes of $C_\varphi$ and $\widehat{C_\varphi}$: (i) $M_i^\perp=\widehat{M_i}$ for $i=1,\ldots,s$; (ii) $(M'_j)^\perp=\widehat{M''_j}$ for $j=1,\ldots,t$; and (iii) $(M''_j)^\perp=\widehat{M'_j}$ for $j=1,\ldots,t$.
\end{lemma}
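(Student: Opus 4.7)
The plan is to establish the three equalities together by first proving, in each case, the containment $\widehat{M}\subseteq M^\perp$ via Theorem~\ref{thm:C-Cperp} and the orthogonality of the idempotents, and then upgrading to equality through a dimension count based on the identity
\[
c\,\frac{p-1}{m}=k_1+\cdots+k_r+\widehat{k_1}+\cdots+\widehat{k_r}
\]
displayed just before the lemma.

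The first key step is the following orthogonality observation: for $u\in I_i^c$ and $w\in I_l^c$, the Hermitian product $\langle u,w\rangle=\sum_k u_k\overline{w_k}$ automatically vanishes unless $\overline{I_l}=I_i$. This uses Theorem~\ref{thm:I_j}(ii), which gives $I_iI_l=0$ whenever $i\ne l$ (since $I_i=e_i\mathcal{R}_p$), together with the explicit computation of the conjugation $x\mapsto x^{-1}$ carried out earlier: it fixes $I_1,\dots,I_s$ and swaps $H_j\leftrightarrow H_j^*$. Thus every $M_i$ is paired with a unique $\widehat{M_{\overline{i}}}$, where $\overline{i}=i$ for $i\le s$, $\overline{s+j}=s+t+j$, and $\overline{s+t+j}=s+j$. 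Now decompose an arbitrary $w\in\widehat{C_\varphi}$ as $w=\sum_l\widehat{w_l}$ with $\widehat{w_l}\in\widehat{M_l}$. By Theorem~\ref{thm:C-Cperp}, $\langle u,w\rangle=0$ for every $u\in M_i$; all cross terms vanish by the orthogonality observation, leaving $\langle u,\widehat{w_{\overline{i}}}\rangle=0$. This yields the three inclusions $\widehat{M_i}\subseteq M_i^\perp$ for $i\le s$, $\widehat{M_j''}\subseteq(M_j')^\perp$, and $\widehat{M_j'}\subseteq(M_j'')^\perp$.

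The second key step is a dimension argument. I would check that the pairing $I_i^c\times I_{\overline{i}}^c\to I_i$, $(u,v)\mapsto\sum_k u_k\overline{v_k}$, is non-degenerate: for a nonzero $u$ with $u_k\ne 0$, the element $u_k$ is invertible in the field $I_i$, and taking $v$ supported only at position $k$ with $\overline{v_k}=u_k^{-1}$ produces $\langle u,v\rangle=e_i\ne 0$. Non-degeneracy gives $\dim_{I_{\overline{i}}}M_i^\perp=c-k_i$, so each of the three inclusions above becomes an inequality $\widehat{k_{\overline{i}}}\le c-k_i$. Summing over all $r=s+2t$ indices, and using that $i\mapsto\overline{i}$ is an involution of the index set (so the sums over $i$ and over $\overline{i}$ agree), we obtain $\sum_j\widehat{k_j}\le rc-\sum_j k_j=c(p-1)/m-\sum_j k_j$. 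The displayed identity asserts this is in fact an equality, so each individual inequality is tight, forcing $\widehat{M_{\overline{i}}}=M_i^\perp$ in all three cases.

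I expect the only real difficulty to be the notational bookkeeping, namely keeping straight the involution $i\mapsto\overline{i}$ and the exact ambient spaces $(H_j^*)^c$ versus $H_j^c$ in which $(M_j')^\perp$ and $\widehat{M_j''}$ live; the two essential inputs (idempotent orthogonality and non-degeneracy of the pairing) are both immediate once extracted from earlier results in the paper.
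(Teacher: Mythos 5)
Your proposal is correct, but it reaches the conclusion by a genuinely different route than the paper for the harder containment. The paper proves both inclusions directly in each of the three cases: it establishes $\widehat{M_i}\subseteq M_i^{\perp}$ exactly as you do (an element of $\widehat{M_i}$ lies in $\widehat{C_\varphi}=C_{\varphi}^{\perp}$ by Theorem~\ref{thm:C-Cperp}, hence is orthogonal to $M_i$), and it obtains $M_i^{\perp}\subseteq \widehat{M_i}$ by the mirror-image use of the idempotent identity $e_i(x)e_j(x)=0$: a vector $w$ of the ambient component space that is orthogonal to $M_i$ is automatically orthogonal to every other $M_j$, hence to all of $C_{\varphi}$, hence lies in $\widehat{C_\varphi}$ and therefore in $\widehat{M_i}$ (and analogously for the $M_j'$, $M_j''$ pairs). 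You replace this second, direct containment by a global dimension count: non-degeneracy of the pairing $I_i^c\times I_{\overline{i}}^c\to I_i$ gives $\dim_{I_{\overline{i}}}M_i^{\perp}=c-k_i$, each containment then yields $\widehat{k_{\overline{i}}}\le c-k_i$, and summing over the involution $i\mapsto\overline{i}$ and comparing with $c(p-1)/m=k_1+\cdots+k_r+\widehat{k_1}+\cdots+\widehat{k_r}$ forces every inequality to be tight. This is sound: the non-degeneracy argument via invertibility of a nonzero coordinate in the field $I_i$ is the right one, and the dimension identity is derived in the paper from Theorem~\ref{thm:C-Cperp} and Lemma~\ref{lem:Mj} before and independently of this lemma, so there is no circularity. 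The trade-off is that the paper's argument is purely local (each component is handled on its own, with no dimension bookkeeping), while yours buys, as a byproduct, the explicit relations $\widehat{k_{\overline{i}}}=c-k_i$ for every component, which the paper does not record.
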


\begin{proof} Recall that $I_i=G_i$ for $i=1,\ldots,s$, $H_j=I_{s+j}$, $H_j^*=I_{s+t+j}$, $M'_j=M_{s+j}$ and $M''_j=M_{s+t+j}$ for $j=1,\ldots,t$. Moreover, if $e_j$ is the nonzero idempotent of $I_j$ then $e_i(x)e_j(x)=0$ for $i\neq j$, $1\leq i,j\leq r=s+2t$.

\begin{itemize}
\item[(i)] First consider the codes $M_i$ over the fields $G_i$ for $i=1,\dots,s$.
Let $e_i(x)$ be the nonzero idempotent of $G_i$ and $v\in M_i$, $1\le i\le s$. Hence $v=(v_1(x)e_i(x),\ldots,v_c(x)e_i(x))$, $v_1,\ldots,v_c\in\mathcal{R}$, and $e_i(x^{-1})=e_i(x)$. If $w=(w_1,\ldots,w_c)\in M_i^\perp$ and $u=(u_1,\ldots,u_c)\in M_j$ for $j\neq i$, $1\le j\le r$, then
$w=(w_1e_i,\ldots,w_ce_i)$, $u=(u_1e_j,\ldots,u_ce_j)$, and
$$\langle v,w\rangle=(v_1(x)w_1(x^{-1})+\ldots+v_c(x)w_c(x^{-1}))e_i(x)=0,$$
$$\langle u,w\rangle=(u_1(x)w_1(x^{-1})e_j(x)e_i(x)+\ldots+u_c(x)w_c(x^{-1})e_j(x)e_i(x)=0.$$
It turns out that $w\perp C_{\varphi}$ and so $\varphi^{-1}(w)\in C^\perp$. Hence $w\in \widehat{M_i}$.

Now take a vector $w\in \widehat{M_i}$. Hence $w=(w_1(x)e_i(x),\ldots,w_c(x)e_i(x))$ and $\varphi^{-1}(w)\perp C$. It follows that
$\langle w,v\rangle=0$ for all $v\in M_i$ and therefore $w\in M_i^\perp$. This proves that $M_i^\perp=\widehat{M_i}$.

\item[(ii)] Let $e'_j(x)=e_{s+j}(x)$ be the nonzero idempotent of $H_j$ and $v\in M'_j$, $1\le j\le t$. Hence $v=(v_1(x)e'_j(x),\ldots,v_c(x)e'_j(x))$, $v_1,\ldots,v_c\in\mathcal{R}$, and $e'_j(x^{-1})=e''_j(x)$, where $e''_j=e_{s+t+j}$. If $w=(w_1e''_j,\ldots,w_ce''_j)\in (M'_j)^\perp$ and $u=(u_1e_i,\ldots,u_ce_i)\in M_i$ for $i\neq s+j$, $1\le i\le r$, then
$$\langle v,w\rangle=(v_1(x)w_1(x^{-1})+\ldots+v_c(x)w_c(x^{-1}))e'_j(x)=0,$$
$$\langle u,w\rangle=(u_1(x)w_1(x^{-1})e_i(x)e'_j(x)+\ldots+w_c(x)u_c(x^{-1})e_i(x)e'_j(x)=0.$$
It turns out that $w\perp C_{\varphi}$ and so $\varphi^{-1}(w)\in C^\perp$. Hence $w\in \widehat{M''_j}$.

If we take $w\in \widehat{M''_j}$ then $\varphi^{-1}(w)\in C^\perp$ and therefore $\langle w,v\rangle=0$ for all $v\in M'_j$. It follows that $w\in (M'_j)^\perp$. This proves that $(M'_j)^\perp=\widehat{M''_j}$.

\item[(iii)] The proof for  $(M''_j)^\perp=\widehat{M'_j}$ is the same as in the previous case.
\end{itemize}

In this way we proved that

 $~~~~~\widehat{C_{\varphi}}=M_2^\perp\oplus\cdots\oplus M_s^\perp\oplus (M'_1)^\perp\oplus\cdots\oplus(M'_t)^\perp \oplus (M''_1)^\perp\oplus\cdots\oplus(M''_t)^\perp$.
\end{proof}

At the end of this section, we present a theorem that is important in classifying binary linear codes having the automorphism $\sigma$ of type \eqref{perm} in order to reject some of the considered codes up to equivalence.

\begin{theorem}\label{thm:eq}{\rm \cite{Yorgov56}}
The following transformations preserve the decomposition and send
the code $C$  to an equivalent one:

a) substitution $x\to x^t$ in $C_{\varphi}$, where $t$ is an
integer, $1\le t\le p - 1$;

b) multiplication of the $j$th coordinate of $C_{\varphi}$ by
$x^{t_j}$ where $t_j$ is an integer, $0\le t_j\le p - 1$, $j =
1,2,\dots,c$;

c) permutation of the first $c$ cycles of $C$;

d) permutation of the last $f$ coordinates of $C$.
   \end{theorem}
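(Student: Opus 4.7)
My plan is to realize each of the four transformations (a)--(d) as a coordinate permutation $\tau\in\mathrm{Sym}(n)$ such that $\tau$ normalizes the cyclic subgroup $\langle\sigma\rangle$. Once such a $\tau$ is produced, the permuted code $\tau(C)$ is equivalent to $C$ by the definition of equivalence, and since $\tau\sigma\tau^{-1}$ is a power of $\sigma$ (hence generates the same cyclic group), $\tau(C)$ is still $\langle\sigma\rangle$-invariant with the same cycle/fixed-point structure. In particular the decomposition $\tau(C)=F_\sigma(\tau(C))\oplus E_\sigma(\tau(C))$ holds in the same combinatorial form, and it only remains to compute the induced action on the building blocks $C_\pi$ and $C_\varphi$ and to check that it matches the formulas in (a)--(d).

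For (a), I take the permutation $\mu_t$ that fixes the last $f$ coordinates and acts on each cycle $\Omega_i=\{(i-1)p+1,\ldots,ip\}$ by $(i-1)p+j\mapsto(i-1)p+(tj\bmod p)$ (with $0$ identified with $p$). Since $\gcd(t,p)=1$ this is a bijection, and a direct computation on $\Omega_i$ gives $\mu_t\sigma\mu_t^{-1}=\sigma^{t^{-1}}\in\langle\sigma\rangle$, so $\mu_t$ normalizes $\langle\sigma\rangle$. Vectors constant on each cycle remain constant on each cycle, so $C_\pi$ is unchanged. Identifying $v\vert\Omega_i\in E_\sigma(C)^*$ with the polynomial $v_i(x)\in\mathcal{P}$, the action of $\mu_t$ replaces $v_i(x)$ by $v_i(x^t)$ coordinatewise in $C_\varphi$, which is exactly the substitution claimed.

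For (b), let $\sigma_j$ denote the cyclic shift acting as $\sigma$ on the cycle $\Omega_j$ and trivially elsewhere, so that $\sigma=\sigma_1\sigma_2\cdots\sigma_c$ on the non-fixed coordinates. The product $\nu:=\sigma_1^{t_1}\sigma_2^{t_2}\cdots\sigma_c^{t_c}$ commutes with every $\sigma_k$ and hence with $\sigma$, so it certainly normalizes (in fact centralizes) $\langle\sigma\rangle$. Its action on $F_\sigma(C)$ is trivial, while on the $j$th coordinate of $C_\varphi$ it induces multiplication of $v_j(x)$ by $x^{t_j}$. For (c) and (d), any permutation $\rho$ of the first $c$ cycles, lifted via $(i-1)p+j\mapsto(\rho(i)-1)p+j$, and any permutation of the last $f$ fixed points commutes with $\sigma$, and directly induces the corresponding permutation on the coordinates of $C_\pi$ and $C_\varphi$ (in the case of (d) leaving $C_\varphi$ untouched).

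The main obstacle is notational bookkeeping rather than substance: one must be careful about the distinction between the positions labelled $1,\ldots,p$ within each $\Omega_i$ and the exponents $0,\ldots,p-1$ of the polynomial representation, and about the direction of the conjugation identity $\mu_t\sigma\mu_t^{-1}=\sigma^{t^{-1}}$. Since $\sigma^{t^{-1}}\in\langle\sigma\rangle$, the whole decomposition $F_\sigma\oplus E_\sigma$ relative to $\langle\sigma\rangle$ survives, and this is all that is required to conclude that each transformation sends $C$ to an equivalent code while preserving its decomposition.
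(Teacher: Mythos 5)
The paper gives no proof of this theorem; it is quoted directly from Yorgov's work \cite{Yorgov56}, so there is no in-paper argument to compare against. Your proof is correct and is the standard one underlying that reference: realizing each of (a)--(d) as a coordinate permutation normalizing $\langle\sigma\rangle$ (with $\mu_t\sigma\mu_t^{-1}$ a generator of the same cyclic group, so that $F_\sigma$ and $E_\sigma$, and hence $C_\pi$ and $C_\varphi$, are carried to the corresponding subcodes of the permuted code) is exactly the right mechanism, and your identification of the induced actions on $C_\pi$ and $C_\varphi$ in each case is accurate.
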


\section{LCD codes}
\label{sect:LCD}

In this section, we study the properties of binary LCD codes invariant under a permutation of odd prime order $p$. Let $C$ be a binary linear code and $\sigma\in\mathrm{Aut}(C)$, where $\sigma$ is the permutation given in \eqref{perm}. We use the structure of $C$, presented in Section \ref{sect:linear}.

\begin{theorem} $C$ is an LCD code if and only if $E_\sigma(C)$ and $F_\sigma(C)$ are LCD codes.
\end{theorem}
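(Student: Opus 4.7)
The plan is to exploit the ambient orthogonal decomposition $\F_2^n = F_\sigma(\F_2^n)\oplus E_\sigma(\F_2^n)$, which decouples the $F$- and $E$-parts of any element of $C\cap C^\perp$. Since $\mathrm{Aut}(C)=\mathrm{Aut}(C^\perp)$, Lemma~\ref{lemma:structure} applies to both $C$ and $C^\perp$, giving $C=F_\sigma(C)\oplus E_\sigma(C)$ and $C^\perp=F_\sigma(C^\perp)\oplus E_\sigma(C^\perp)$, so once the ambient splitting is set up the rest is bookkeeping.

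First I would verify directly that $F_\sigma(\F_2^n)$ (vectors constant on each $p$-cycle) and $E_\sigma(\F_2^n)$ (vectors of even weight on each $p$-cycle and zero on each fixed point) are mutually orthogonal: on a $p$-cycle a constant vector pairs with an even-weight vector to give $0$ in $\F_2$, and on fixed points the $E$-component vanishes. A dimension count $(c+f)+c(p-1)=n$ then shows that this is in fact an orthogonal direct-sum decomposition of the whole space $\F_2^n$.

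Next, given $w\in C\cap C^\perp$, I would split $w=w_F+w_E$ with $w_F\in F_\sigma(\F_2^n)$ and $w_E\in E_\sigma(\F_2^n)$. Using $w\in C$ together with Lemma~\ref{lemma:structure} forces $w_F\in F_\sigma(C)$ and $w_E\in E_\sigma(C)$ (the decomposition is unique since $F_\sigma(\F_2^n)\cap E_\sigma(\F_2^n)=\{0\}$), and applying the same reasoning to $w\in C^\perp$ gives $w_F\in F_\sigma(C^\perp)$ and $w_E\in E_\sigma(C^\perp)$. Hence $w_F\in F_\sigma(C)\cap F_\sigma(C^\perp)$ and $w_E\in E_\sigma(C)\cap E_\sigma(C^\perp)$.

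Finally I would identify $F_\sigma(C)\cap F_\sigma(C^\perp)=F_\sigma(C)\cap F_\sigma(C)^\perp$, and similarly for $E$. The inclusion $\subseteq$ is immediate from $F_\sigma(C^\perp)\subseteq C^\perp\subseteq F_\sigma(C)^\perp$. For $\supseteq$: any $v\in F_\sigma(C)\cap F_\sigma(C)^\perp$ lies in $F_\sigma(\F_2^n)$ and is orthogonal to $F_\sigma(C)$ by hypothesis and to $E_\sigma(C)$ via the ambient orthogonality of the two summands, so $v\perp C$ and thus $v\in C^\perp\cap F_\sigma(\F_2^n)=F_\sigma(C^\perp)$. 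Putting the pieces together, $C\cap C^\perp=\{0\}$ iff both intersections $F_\sigma(C)\cap F_\sigma(C)^\perp$ and $E_\sigma(C)\cap E_\sigma(C)^\perp$ are trivial, which is exactly the claim. The only mildly delicate point is this last identification, where one must be careful that the orthogonals are taken in the ambient $\F_2^n$ but the decomposition of $w$ lives in pieces of the $\sigma$-invariant subspaces; the ambient orthogonal splitting is what makes the two viewpoints coincide.
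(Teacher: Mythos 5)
Your argument is correct, but it takes a genuinely different route from the paper. The paper isolates the purely linear-algebraic statement that if $C=C_1\oplus C_2$ with $C_1\perp C_2$ then $C$ is LCD if and only if $C_1$ and $C_2$ are, and proves it with no reference to $\sigma$ at all: for $x=x_1+x_2\in C\cap C^\perp$ and $y_i\in C_i$ one has $x_i\cdot y_i=(x_1+x_2)\cdot y_i=x\cdot y_i=0$ by the mutual orthogonality of the summands, so each $x_i$ lies in $C_i\cap C_i^\perp$; the forward direction is the same observation you make. This needs neither the ambient splitting $\F_2^n=F_\sigma(\F_2^n)\oplus E_\sigma(\F_2^n)$ nor the decomposition of $C^\perp$ under $\sigma$, and it applies verbatim to any orthogonal internal direct sum. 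What your extra machinery buys is the explicit identifications $F_\sigma(C)\cap F_\sigma(C)^\perp=F_\sigma(C)\cap F_\sigma(C^\perp)$ and the $E$-analogue, i.e.\ a description of how the hull of $C$ distributes over the $\sigma$-decomposition; this is in the spirit of Lemma~\ref{lemma:Cpi} and Theorem~\ref{thm:C-Cperp}, which the paper proves separately when it needs them.

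One point in your write-up needs repair. You claim that mutual orthogonality of $F_\sigma(\F_2^n)$ and $E_\sigma(\F_2^n)$ together with the dimension count $(c+f)+c(p-1)=n$ yields a direct-sum decomposition of $\F_2^n$. Over $\F_2$ that inference fails: two mutually orthogonal subspaces whose dimensions sum to $n$ need not intersect trivially (a self-dual code is orthogonal to itself). You must check directly that $F_\sigma(\F_2^n)\cap E_\sigma(\F_2^n)=\{0\}$: a vector constant on a $p$-cycle and of even weight there must be the zero constant because $p$ is odd, and it must vanish on the fixed points. This is a one-line fix, but it is also the only place where the oddness of $p$ enters your argument, so it should not be elided.
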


\begin{proof} We will prove that if $C=C_1\oplus C_2$ and $C_1\perp C_2$, then $C$ is an LCD code if and only if both $C_1$ and $C_2$ are LCD codes.

$\Rightarrow)$ Let $C$ be an LCD code. If $w=(w_1,\ldots,w_{n})\in C_1\cap C_1^\perp$ then $w\perp C_1$ and $w\perp C_2$. This gives us that $w\perp C$ and so $w\in C\cap C^\perp$. Hence $w=0$ and $C_1$ is an LCD code. The same holds for the code $C_2$.

$\Leftarrow)$ Let $C_1$ and $C_2$ be LCD codes, and $x\in C\cap C^\perp$. Since $C= C_1\oplus C_2$ then $x=x_1+x_2$, $x_i\in C_i$, $i=1,2$. Take $y_i\in C_i$, $i=1,2$. Then we have $x\cdot y_i=0$ and
$$x_i\cdot y_i=(x_1+x_2)\cdot y_i=x\cdot y_i=0 \ \Rightarrow x_i\perp C_i \ \Rightarrow x_i\in C_i\cap C_i^\perp \ \Rightarrow x_i=0, \
i=1,2.$$
This proves that $x=0$ and so $C$ is also an LCD code.

To complete the proof, we take $C_1=E_\sigma(C)$ and $C_2=F_\sigma(C)$.
\end{proof}

%This theorem was proved in \cite{ACCT2020} but we give the proof here for completeness.

\begin{remark}\rm
 Obviuosly, $E_{\sigma}(C)$ is an LCD code if and only if the code $E_{\sigma}(C)^*$ is LCD.
\end{remark}

\begin{lemma} The code $F_\sigma(C)$ is LCD if and only if its image $C_\pi$ is also an LCD code.
\end{lemma}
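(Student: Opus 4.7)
The plan is to obtain this as a direct corollary of Lemma~\ref{lemma:Cpi}, which already establishes the equality $\pi(F_\sigma(C)\cap F_\sigma(C)^\perp) = C_\pi \cap C_\pi^\perp$. Since being LCD is equivalent to triviality of the self-orthogonal intersection, the whole statement will fall out once injectivity of $\pi$ is invoked.

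First I would rephrase the claim using the defining characterization: a binary linear code $D$ is LCD iff $D\cap D^\perp=\{\mathbf{0}\}$. So it suffices to show the equivalence
$$F_\sigma(C)\cap F_\sigma(C)^\perp=\{\mathbf{0}\}\quad\Longleftrightarrow\quad C_\pi\cap C_\pi^\perp=\{\mathbf{0}\}.$$

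Next I would use that $\pi\colon F_\sigma(C)\to C_\pi$ is a monomorphism of $\F_2$-vector spaces (as recorded in Section~\ref{sect:linear}), hence injective. Injectivity gives, for any subspace $U\subseteq F_\sigma(C)$, that $U=\{\mathbf{0}\}$ iff $\pi(U)=\{\mathbf{0}\}$. Applied to $U=F_\sigma(C)\cap F_\sigma(C)^\perp$, together with Lemma~\ref{lemma:Cpi}, this yields precisely the required biconditional.

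I do not anticipate a real obstacle: the substantive content, namely the inner-product identity $x\cdot y=\pi(x)\cdot\pi(y)$ for $x,y\in F_\sigma(C)$ and the equality of Lemma~\ref{lemma:Cpi}, has already been proved. The only minor subtlety is that the two orthogonal complements live in different ambient spaces ($\F_2^n$ for $F_\sigma(C)^\perp$ and $\F_2^{c+f}$ for $C_\pi^\perp$), but since both characterizations of LCD involve only the self-orthogonal intersection inside the code itself, this causes no difficulty and is exactly what is mediated by Lemma~\ref{lemma:Cpi}.
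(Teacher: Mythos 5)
Your proposal is correct and follows essentially the same route as the paper: both deduce the statement directly from Lemma~\ref{lemma:Cpi} via the characterization of LCD codes as those with trivial hull. Your explicit invocation of the injectivity of $\pi$ (needed to conclude $F_\sigma(C)\cap F_\sigma(C)^\perp=\{\mathbf{0}\}$ from $\pi(F_\sigma(C)\cap F_\sigma(C)^\perp)=\{\mathbf{0}\}$) is a small point the paper leaves implicit.
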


\begin{proof}
The proof follows immediately from Lemma \ref{lemma:Cpi}. The equality $\pi(F_\sigma(C)\cap F_\sigma(C)^\perp)=C_{\pi}\cap C_{\pi}^{\perp}$ gives us that $F_\sigma(C)\cap F_\sigma(C)^\perp=\{\mathbf{0}\}$ if and only if $C_{\pi}\cap C_{\pi}^{\perp}=\{\mathbf{0}\}$.
\end{proof}

\begin{lemma}\label{thm:phi} The code $E_\sigma(C)^*$ is LCD if and only if its image $C_\varphi$ is also an LCD code.
\end{lemma}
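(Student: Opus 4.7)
The plan is to reduce the equivalence to the identity, established in the proof of Theorem \ref{thm:C-Cperp}, that relates the Hermitian inner product on $\mathcal{P}^c$ to the Euclidean inner products of $\sigma$-shifts. Concretely, for $v,w\in \mathcal{P}^c$ with preimages $v'=\varphi^{-1}(v)$ and $w'=\varphi^{-1}(w)$ in $E_\sigma(C)^*\subseteq \mathbb{F}_2^{cp}$, we have
$$\langle v,w\rangle \;=\;\sum_{k=0}^{p-1}\bigl(v'\cdot\sigma^k(w')\bigr)\,x^k \;\in\;\mathcal{P},$$
so $\langle v,w\rangle=0$ if and only if $v'\cdot\sigma^k(w')=0$ for every $k=0,\dots,p-1$. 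Since $\varphi$ is an $\F_2$-linear bijection, it will suffice to translate the LCD conditions through this identity.

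For the forward direction, I would take $v\in C_\varphi\cap C_\varphi^\perp$. Its preimage $v'$ lies in $E_\sigma(C)^*$, and for an arbitrary $w'\in E_\sigma(C)^*$, setting $w=\varphi(w')$ gives $\langle v,w\rangle=0$. Specializing the above identity at $k=0$ yields $v'\cdot w'=0$ for all $w'\in E_\sigma(C)^*$, so $v'\in E_\sigma(C)^*\cap (E_\sigma(C)^*)^\perp=\{\mathbf{0}\}$. Hence $v=0$, proving $C_\varphi$ is LCD. Notice that only the $k=0$ slice of the identity is used here.

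For the reverse direction, starting from $v'\in E_\sigma(C)^*\cap (E_\sigma(C)^*)^\perp$, I would set $v=\varphi(v')\in C_\varphi$ and check $v\in C_\varphi^\perp$ against an arbitrary $w=\varphi(w')\in C_\varphi$. The essential ingredient here is that $E_\sigma(C)^*$ is invariant under $\sigma$: indeed $E_\sigma(C)\subseteq C$ is $\sigma$-invariant, and $\sigma$ fixes the last $f$ coordinates pointwise, so deleting them does not destroy invariance on the first $cp$ coordinates. Therefore $\sigma^k(w')\in E_\sigma(C)^*$ for every $k$, and $v'\cdot\sigma^k(w')=0$ for all $k$. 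The identity then delivers $\langle v,w\rangle=0$, placing $v$ in $C_\varphi\cap C_\varphi^\perp=\{\mathbf{0}\}$, whence $v'=\mathbf{0}$, as required. There is no real obstacle: the entire content is packaged in the identity proved earlier, and the only subtlety is remembering that the backward direction (unlike the forward one) genuinely needs all $p$ components, and hence the $\sigma$-invariance of $E_\sigma(C)^*$.
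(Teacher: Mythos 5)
Your argument is correct. It takes a slightly different, and in fact more self-contained, route than the paper. The paper's proof reduces the lemma to the set identity $\varphi\bigl(E_{\sigma}(C)^*\cap E_{\sigma}(C^{\perp})^*\bigr)=C_{\varphi}\cap \widehat{C_{\varphi}}$ and then appeals to Theorem~\ref{thm:C-Cperp} to identify $\widehat{C_\varphi}$ with $C_\varphi^\perp$; implicitly it also identifies the Euclidean hull $E_\sigma(C)^*\cap (E_\sigma(C)^*)^\perp$ with $E_\sigma(C)^*\cap E_{\sigma}(C^{\perp})^*$, a step it does not spell out (and it only verifies one inclusion of the displayed set identity). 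You instead bypass both dual codes $C^\perp$ and $E_\sigma(C^\perp)$ entirely and work directly with $(E_\sigma(C)^*)^\perp$ and $C_\varphi^\perp$, extracting from the proof of Theorem~\ref{thm:C-Cperp} only the identity $\langle v,w\rangle=\sum_{k=0}^{p-1}\bigl(v'\cdot\sigma^k(w')\bigr)x^k$ and exploiting the $\sigma$-invariance of $E_\sigma(C)^*$ for the reverse implication. What your version buys is a complete two-way argument that the Euclidean hull of $E_\sigma(C)^*$ corresponds under $\varphi$ exactly to the Hermitian hull of $C_\varphi$, without relying on the statement of Theorem~\ref{thm:C-Cperp} or on the unproved hull identification; what the paper's version buys is brevity, since it reuses already-established machinery. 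Your observation that the forward direction needs only the constant coefficient ($k=0$) while the reverse direction needs all $p$ coefficients, and hence the $\sigma$-invariance, is exactly the right point to isolate.
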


\begin{proof}
First we will prove that $\varphi(E_{\sigma}(C)^*\cap E_{\sigma}(C^{\perp})^*)=C_{\varphi}\cap \widehat{C_{\varphi}}$.

If $v\in E_{\sigma}(C)^*\cap E_{\sigma}(C^{\perp})^*$ then $$\varphi(v)\in\varphi(E_{\sigma}(C)^*)\cap\varphi(E_{\sigma}(C^{\perp})^*)=C_{\varphi}\cap \widehat{C_{\varphi}}.$$
Hence $\varphi(E_{\sigma}(C)^*\cap E_{\sigma}(C^{\perp})^*)\subseteq C_{\varphi}\cap \widehat{C_{\varphi}}$.

Consequently, $E_{\sigma}(C)^*\cap E_{\sigma}(C^{\perp})^*=\{\mathbf{0}\}$ if and only if $C_{\varphi}\cap \widehat{C_{\varphi}}=\{\mathbf{0}\}$. This result proves the lemma.
\end{proof}

We summarize the results from the above lemmas in the following theorem.

\begin{theorem}\label{thm:main}
The binary code $C$ having an automorphism $\sigma$ of odd prime order $p$ is an LCD code if and only if $C_\varphi$  and $C_\pi$ are LCD codes.
\end{theorem}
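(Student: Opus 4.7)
The plan is to assemble Theorem \ref{thm:main} directly as a corollary of the preceding results in Section \ref{sect:LCD}, so no new machinery is required. The strategy is a straightforward chain of equivalences, and the main work has already been carried out in the individual lemmas.

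First I would invoke the opening theorem of the section, which states that $C$ is LCD if and only if both $F_\sigma(C)$ and $E_\sigma(C)$ are LCD. This reduces the problem to handling the two building blocks from the Huffman decomposition (Lemma \ref{lemma:structure}) separately, legitimately because $C = F_\sigma(C) \oplus E_\sigma(C)$ with the two summands mutually orthogonal.

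Next I would translate each of these conditions into a statement about the images of the $\pi$ and $\varphi$ maps. For the fixed subcode, the lemma following the remark gives $F_\sigma(C)$ LCD $\iff$ $C_\pi$ LCD; this is a direct consequence of Lemma \ref{lemma:Cpi} together with the fact that $\pi$ is a monomorphism that preserves the inner product on $F_\sigma(C)$. For the even-weight subcode, the remark reduces $E_\sigma(C)$ LCD to $E_\sigma(C)^*$ LCD (simply because the deleted $f$ fixed coordinates contribute nothing that $F_\sigma(C)$ has not already accounted for), and then Lemma \ref{thm:phi} delivers $E_\sigma(C)^*$ LCD $\iff$ $C_\varphi$ LCD.

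Concatenating these equivalences yields $C$ LCD $\iff$ $C_\pi$ and $C_\varphi$ are both LCD, which is exactly the claim. There is no real obstacle here: the proof is essentially a bookkeeping exercise that threads the previously established equivalences together, and the only thing to be careful about is citing each of the four prior statements in the correct direction so that both implications of the final ``if and only if'' are visibly covered.
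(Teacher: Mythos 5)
Your proposal is correct and follows exactly the route the paper intends: the paper itself offers no separate proof, stating only that the theorem ``summarizes the results from the above lemmas,'' which is precisely the chain of equivalences you assemble. Your only added value is making the bookkeeping explicit, which is fine.
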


The following theorem gives more detailed conditions for the code $C_{\varphi}$ to be LCD under the inner product \eqref{eq:product}.

\begin{theorem}\label{thm:Cphi} The code $C_{\varphi}$ is an LCD code under the inner product \eqref{eq:product} if and only if $M_i\cap M_i^{\perp}=\{0\}$,  $i=1,\ldots,s$, $M'_i\cap (M''_i)^{\perp}=\{0\}$ and $M''_i\cap (M'_i)^{\perp}=\{0\}$, $j=1,\ldots,t$.
\end{theorem}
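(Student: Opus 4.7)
The plan is to reduce the condition ``$C_\varphi$ is LCD'' to a condition on its idempotent components, and then translate via the preceding lemma that identifies $\widehat{M_i}$ with various orthogonals of $M_i$, $M'_j$, $M''_j$.

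First I would observe that, by definition, $C_\varphi$ is LCD under the inner product \eqref{eq:product} exactly when $C_\varphi\cap C_\varphi^\perp=\{0\}$, and by Theorem~\ref{thm:C-Cperp} we have $C_\varphi^\perp=\widehat{C_\varphi}$. Thus the task is to show
\[
C_\varphi\cap\widehat{C_\varphi}=\{0\}\ \Longleftrightarrow\ M_i\cap M_i^\perp=\{0\}\ \text{and}\ M'_j\cap(M''_j)^\perp=M''_j\cap(M'_j)^\perp=\{0\}.
\]

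Next I would exploit the idempotent decomposition. By Theorem~\ref{thm:I_j}, the nonzero idempotents $e_1,\dots,e_r$ of $I_1,\dots,I_r$ are pairwise orthogonal and sum to the idempotent generator of $\mathcal{P}$, so $\mathcal{P}^c=I_1^c\oplus\cdots\oplus I_r^c$ as $\F_2$-spaces. Both $C_\varphi$ and $\widehat{C_\varphi}$ are $\mathcal{P}$-submodules that inherit this decomposition, namely $C_\varphi=M_1\oplus\cdots\oplus M_r$ by Lemma~\ref{lem:Mj} and similarly $\widehat{C_\varphi}=\widehat{M_1}\oplus\cdots\oplus\widehat{M_r}$, with $M_i,\widehat{M_i}\subseteq I_i^c$. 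Given any $v\in C_\varphi\cap\widehat{C_\varphi}$, write $v=v_1+\cdots+v_r=w_1+\cdots+w_r$ with $v_i\in M_i$ and $w_i\in\widehat{M_i}$; since both summands live in the direct sum $I_1^c\oplus\cdots\oplus I_r^c$, uniqueness forces $v_i=w_i\in M_i\cap\widehat{M_i}$ for every $i$. Conversely, any tuple $(u_1,\dots,u_r)$ with $u_i\in M_i\cap\widehat{M_i}$ assembles to an element of $C_\varphi\cap\widehat{C_\varphi}$. Hence
\[
C_\varphi\cap\widehat{C_\varphi}=\bigoplus_{i=1}^{r}\bigl(M_i\cap\widehat{M_i}\bigr),
\]
so the intersection vanishes iff each summand vanishes.

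Finally I would invoke the preceding lemma, which gives $\widehat{M_i}=M_i^\perp$ for $1\le i\le s$, $\widehat{M'_j}=(M''_j)^\perp$, and $\widehat{M''_j}=(M'_j)^\perp$ for $1\le j\le t$. Substituting into the vanishing condition for each component yields exactly $M_i\cap M_i^\perp=\{0\}$ for $i=1,\dots,s$, $M'_j\cap(M''_j)^\perp=\{0\}$, and $M''_j\cap(M'_j)^\perp=\{0\}$ for $j=1,\dots,t$. This proves both directions simultaneously.

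The only delicate point, which I would state explicitly before the main equivalence, is the uniqueness of the idempotent decomposition on $\mathcal{P}^c$: the fact that $M_i$ and $\widehat{M_i}$ both sit inside the \emph{same} ambient summand $I_i^c$ is what allows the two different decompositions of $v$ to be identified coordinate-wise. Once this is pinned down via $e_ie_j=0$ and $\sum_{i\ge 1}e_i(x)=1-e_0(x)$ in $\mathcal{P}$, the rest is a routine assembly of the previously established identifications.
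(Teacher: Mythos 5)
Your proposal is correct and follows essentially the same route as the paper: reduce to $C_\varphi\cap\widehat{C_\varphi}=\{0\}$ via Theorem~\ref{thm:C-Cperp}, use the direct-sum decomposition into the idempotent components $M_i$ and $\widehat{M_i}$ (both inside $I_i^c$) to identify the intersection componentwise, and then apply the lemma giving $\widehat{M_i}=M_i^\perp$, $\widehat{M'_j}=(M''_j)^\perp$, $\widehat{M''_j}=(M'_j)^\perp$. Your formulation $C_\varphi\cap\widehat{C_\varphi}=\bigoplus_i(M_i\cap\widehat{M_i})$ merely packages the paper's two directions into a single identity.
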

\begin{proof}
If $C_{\varphi}$ is an LCD code and $v\in M_i\cap M_i^{\perp}$, $1\le i\le s$, then $v\in \widehat{M_i}$ and so $v\in \widehat{C_\varphi}=C_\varphi^\perp$. Hence $v=0$ and $M_i$ is an LCD code over the field $G_i$ under the inner product \eqref{eq:product}. If $v\in M'_j\cap (M''_j)^\perp$, $1\le j\le t$, then $v\in \widehat{M'_j}$ and so $v\in \widehat{C_\varphi}=C_\varphi^\perp$. The same follows when $v\in M''_j\cap (M'_j)^\perp$. It turns out that $M'_i\cap (M''_i)^{\perp}=\{0\}$ and $M''_i\cap (M'_i)^{\perp}=\{0\}$, $j=1,\ldots,t$.

In the other hand, if $M_i\cap M_i^{\perp}=\{0\}$ for $i=1,\ldots,s$, $M'_i\cap (M''_i)^{\perp}=\{0\}$, $M''_i\cap (M'_i)^{\perp}=\{0\}$, for $j=1,\ldots,t$, and $v\in C_{\varphi}\cap C_{\varphi}^\perp=C_{\varphi}\cap \widehat{C_{\varphi}}$, then
$$v=v_1+\cdots+v_r=\widehat{v_1}+\cdots+\widehat{v_r},$$
where $v_i\in M_i$, $\widehat{v_i}\in \widehat{M_i}$, $i=1,\ldots,r$. Since the sum is direct, we have $v_i=\widehat{v_i}$ for all $i=1,\ldots,r$. Hence $v_i\in M_i\cap \widehat{M_i}$. It follows that $v_2=\ldots=v_s=0$ and $v_j\in M'_j\cap \widehat{M'_j}=M'_j\cap (M''_j)^\perp$ for $1\le j\le t$, so $v_{s+1}=\ldots=v_r=0$. It turns out that $v=0$ and $C_{\varphi}$ is an LCD code under the inner product \eqref{eq:product}.
\end{proof}

%As an application of the presented theory, we give some construction and classification results for $p=7$.
For the primes $p<30$, 2 is a multiplicative root modulo $p$ for $p=3$, 5, 11, 13, 19, and 29. For these values of $p$, the code $C_{\varphi}$ is a linear code over the field $\mathcal{P}\cong \F_{2^{p-1}}$.

If $p=7$, 17, and 23, the multiplicative order of 2 modulo $p$ is equal to $\frac{p-1}{2}$. Moreover,
\begin{align*}
  x^7-1 & = (x-1)(x^3+x+1)(x^3+x^2+1)=(x-1)h(x)h^*(x),\\
  x^{17}-1 & = (x - 1) (x^8 + x^5 + x^4 + x^3 + 1) (x^8 + x^7 + x^6 + x^4 + x^2 + x + 1)=(x-1)g_1(x)g_2(x),\\
  x^{23}-1 &=(x - 1) (x^{11} + x^9 + x^7 + x^6 + x^5 + x + 1) (x^{11} + x^{10} + x^6 + x^5 + x^4 + x^2 + 1)\\
  & =(x-1)h_1(x)h_1^*(x).
\end{align*}
The case $p=17$ is different from the other two, because the irreducible factors $g_1(x)$ and $g_2(x)$ are self-reciprocal polynomials.

\section{Code Constructions}
\label{sect:aut7}

In this section, we present some applications of the presented theory. We give construction and classification results for LCD codes having automorphisms of prime order $p$ for $p=7$, 11, and 17. Many of the constructed LCD codes are optimal as linear codes \cite{Grassl-table}. All calculations related to equivalences, automorphism groups and weight enumerators of the considered binary codes were executed with the software package \textsc{Q-Extension} \cite{Q-Extension}.

\subsection{$p=7$}

Let $C$ be an LCD $[n,k,d]$ code having an automorphism of order 7 with $c=4$ independent 7-cycles and $f$ fixed points. Then $C_{\pi}$ is a binary LCD $[4+f,k_{\pi}]$ code, and $C_{\varphi}=M_1\oplus M_2$, where $M_j$ is a linear code of length $4$ over the field $I_j\cong\F_8$, $j=1,2$. Denote $\dim_{I_j}M_j$ by $k_j$, $j=1,2$. Since
$$x^6+x^5+x^4+x^3+x^2+x+1=(x^3+x+1)(x^3+x^2+1),$$
we can take $I_j=\{0,e_j(x),xe_j(x),\ldots,x^6e_j(x)\}$, $j=1,2$, where $e_1(x)=1+x+x^2+x^4$ and $e_2(x)=1+x^3+x^5+x^6$ are the corresponding idempotents. Since $e_1(x^{-1})=e_2(x)$ and $e_1(x)e_2(x)=0$, we have $e_1(x)e_1(x^{-1})=e_1(x)e_2(x)=0$ and so the code $M_1$ is self-orthogonal under the defined inner product (the same for $M_2$). The presented structure gives us that the binary codes $E_1=\varphi^{-1}(M_1)$ and $E_2=\varphi^{-1}(M_2)$ are doubly-even $[7c,3k_1,4d_1]$ and $[7c,3k_2,4d_2]$ codes, respectively, and $E_\sigma(C)^*=E_1\oplus E_2$.  As the substitution $x\to x^3$ in $C_\varphi$ interchanges $M_1$ and $M_2$, we may assume that $k_1\ge k_2$. To obtain more restrictions on $k_1$ and $k_2$, we use the following lemma.

\begin{lemma}\label{lem:hull} If $C$ is a binary linear code of length $n$ with $\dim(C\cap C^\perp)=s\ge 1$, and $C_1=\langle C,x\rangle$ for $x\in\F_2^n$, $x\not\in C$, then $\dim(C_1\cap C_1^\perp)\ge s-1$.
\end{lemma}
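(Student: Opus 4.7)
The plan is to exhibit an explicit subspace of $C_1\cap C_1^\perp$ of dimension at least $s-1$, obtained directly from the old hull $H:=C\cap C^\perp$.

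First I would observe that since $C\subseteq C_1$, dualizing gives $C_1^\perp\subseteq C^\perp$. More precisely, because $C_1$ is generated by $C$ together with the single vector $x$, one has
\[
C_1^\perp \;=\; C^\perp\cap\{x\}^\perp \;=\; \{\,y\in C^\perp : y\cdot x=0\,\}.
\]
In particular, $C_1^\perp$ is the kernel of the linear functional $\ell:C^\perp\to\F_2$, $\ell(y)=y\cdot x$, so it has codimension at most $1$ in $C^\perp$.

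Next I would restrict attention to the old hull $H$. Every element $h\in H$ already lies in $C\subseteq C_1$, so $h\in C_1\cap C_1^\perp$ as soon as $h\in C_1^\perp$, i.e.\ as soon as $h\cdot x=0$. Let $H_0:=\{h\in H : h\cdot x=0\}=H\cap C_1^\perp$; this is the kernel of the restriction $\ell|_H$, hence
\[
\dim H_0 \;\ge\; \dim H - 1 \;=\; s-1.
\]
By construction $H_0\subseteq C_1\cap C_1^\perp$, which gives $\dim(C_1\cap C_1^\perp)\ge s-1$, as required.

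There is essentially no obstacle: the whole argument is the standard observation that adjoining one vector to a code imposes at most one linear condition on the dual, hence can kill at most one dimension of the hull. The only point that needs a word of care is that the vectors of $H$ automatically survive the passage from $C$ to $C_1$ on the primal side (because $H\subseteq C\subseteq C_1$), so the only new constraint they must satisfy to land in $C_1\cap C_1^\perp$ is orthogonality to $x$, and that is a single linear condition.
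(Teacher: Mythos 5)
Your proof is correct and follows essentially the same route as the paper: both define the subspace $\mathcal{H}_0=\{h\in C\cap C^\perp : h\cdot x=0\}$, note it is the kernel of a linear functional on the hull and hence has dimension at least $s-1$, and observe that it lies in $C_1\cap C_1^\perp$. The only cosmetic difference is that you make the identity $C_1^\perp=C^\perp\cap\{x\}^\perp$ explicit, which the paper leaves implicit.
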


\begin{proof} The intersection $\mathcal{H}(C)=C\cap C^\perp$ is called the hull of the code $C$. Let $s=\dim\mathcal{H}(C)$. If $\mathcal{H}_0=\{ v\in \mathcal{H}(C) \ \vert \ v\cdot x=0\}$ then $\mathcal{H}_0$ is a subcode of $\mathcal{H}(C)$ of dimension $s$ or $s-1$. Moreover,  $\mathcal{H}_0\subseteq C\subset C_1$ and $\mathcal{H}_0\perp C_1$, so $\mathcal{H}_0\subseteq \mathcal{H}(C_1)=C_1\cap C_1^\perp$. It follows that $\dim(C_1\cap C_1^\perp)\ge\dim \mathcal{H}_0\ge s-1$.
\end{proof}

\begin{corollary}
Let $C$ be a binary LCD code having an automorphism of order 7 with $c$ independent 7-cycles. If $\dim_{I_1}M_1=k_1$ and $\dim_{I_2}M_2=k_2$, then $k_1=k_2$.
\end{corollary}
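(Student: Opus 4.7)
The strategy is to combine Lemma~\ref{lem:hull} with the self-orthogonality of the binary codes $E_1=\varphi^{-1}(M_1)$ and $E_2=\varphi^{-1}(M_2)$ noted just before the corollary.

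First I would unpack the binary structure. Since $x^7-1=(x-1)h(x)h^*(x)$, we have $s=0$, $t=1$, $C_\varphi=M_1\oplus M_2$, and $E_\sigma(C)^*=E_1\oplus E_2$, with $\dim_{\F_2}E_i=3k_i$ because $\mathrm{ord}_7(2)=3$. The key input is that $e_1(x)e_1(x^{-1})=e_1(x)e_2(x)=0$, so $\langle u,v\rangle=0$ for all $u,v\in M_1$. Using the identity
\[
\sum_{i=1}^c v_i(x)w_i(x^{-1})=\sum_{k=0}^{p-1}\Bigl(\sum_{i=1}^c v_i\cdot\sigma^k(w_i)\Bigr)x^k
\]
established in the proof of Theorem~\ref{thm:C-Cperp}, the vanishing of the Hermitian inner product on $M_1\times M_1$ implies in particular that $v'\cdot w'=0$ for every $v',w'\in E_1$, i.e.\ $E_1\subseteq E_1^\perp$; symmetrically $E_2\subseteq E_2^\perp$. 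Hence $\mathcal{H}(E_1)=E_1$ and $\mathcal{H}(E_2)=E_2$, so these two hulls have binary dimensions $3k_1$ and $3k_2$ respectively.

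Next, because $C$ is LCD, Theorem~\ref{thm:main} together with Lemma~\ref{thm:phi} and the remark that $E_\sigma(C)$ is LCD if and only if $E_\sigma(C)^*$ is LCD imply that $E_\sigma(C)^*=E_1\oplus E_2$ is itself LCD, so $\dim\mathcal{H}(E_\sigma(C)^*)=0$. Then I would build $E_\sigma(C)^*$ from $E_1$ by adjoining the vectors of a basis of $E_2$ one at a time; since $E_1\cap E_2=\{0\}$ each adjoined vector is genuinely new. Iterating Lemma~\ref{lem:hull}, the hull dimension drops by at most one per step, so after $3k_2$ adjunctions
\[
0=\dim\mathcal{H}(E_\sigma(C)^*)\ge 3k_1-3k_2,
\]
whence $k_1\le k_2$. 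The symmetric argument, beginning with $E_2$ and adjoining a basis of $E_1$, gives $k_2\le k_1$, and the equality $k_1=k_2$ follows.

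The only delicate ingredient is the translation between Hermitian self-orthogonality of $M_i$ and ordinary binary self-orthogonality of $E_i$; once that is in place, the argument reduces to elementary dimension counting and repeated use of Lemma~\ref{lem:hull}.
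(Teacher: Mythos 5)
Your proof is correct and follows essentially the same route as the paper: both rely on the self-orthogonality of $E_1$ and $E_2$ (the paper phrases this as double-evenness, derived from the same identity $e_1(x)e_1(x^{-1})=0$), then apply Lemma~\ref{lem:hull} repeatedly to bound the hull of $E_\sigma(C)^*=E_1\oplus E_2$ from below by $3k_1-3k_2$ and invoke the LCD property. The only cosmetic difference is that you run the adjunction argument in both directions to get the two inequalities, whereas the paper assumes $k_1\ge k_2$ without loss of generality (via the substitution $x\to x^3$) and derives the reverse inequality once.
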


\begin{proof}
As we have already mentioned, we can take $k_1\ge k_2$. Then $E_1=\varphi^{-1}(M_1)$ is a binary $[7c,3k_1]$ doubly-even code and therefore $\dim \mathcal{H}(E_1)=3k_1$. Applying Lemma \ref{lem:hull} $3k_2$ times, we have that $\dim \mathcal{H}(E_\sigma(C)^*)\ge 3k_1-3k_2$. Since $C$ is an LCD code, $E_\sigma(C)^*$ is also a binary LCD code and therefore $\dim \mathcal{H}(E_\sigma(C)^*)=0$. This proves that $k_1-k_2=0$.
\end{proof}

Let $c=4$ and $\sigma=(1,2,\ldots,7)(8,9,\ldots,14)(15,15,\ldots,21)(22,23,\ldots,28)$. Then $M_j$ is a $[4,k_j,d_j]$ linear code over $I_j\cong \F_8$, $j=1,2$. Consider first the case $k_1=k_2=1$. We are looking for optimal binary LCD $[28+f,6+k_\pi,d]$ codes for different values of $f$ and $k_\pi$ such that $d\ge 11$.

In this case $E_\sigma(C)^*$ is a binary LCD $[28,6]$ code. We can take $M_1=\langle(e_1,e_1,e_1,0)\rangle$ or $\langle(e_1,e_1,e_1,e_1)\rangle$, up to equivalence. The  $[4,1,3]$ code $M_1$ provides two inequivalent $[28,6,12]$ binary LCD codes, namely with $M_2=\langle(e_2,xe_2,0,e_2)\rangle$ and $M_2=\langle(e_2,xe_2,x^2e_2,e_2)\rangle$.
%Moreover, from the same code $M_1$ we obtain four inequivalent $[28,6,10]$ binary LCD codes with $M_2=\langle(e_2,e_2,e_2,e_2)\rangle$, $\langle(e_2,e_2,xe_2,e_2)\rangle$, $\langle(e_2,e_2,x^3e_2,e_2)\rangle$, and $\langle(e_2,xe_2,x^3e_2,e_2)\rangle$.
The  $[4,1,4]$ code $M_1$ provides only one $[28,6,12]$ binary LCD code, for which $M_2=\langle(e_2,xe_2,x^2e_2,x^5e_2)\rangle$. All three codes have the same weight enumerator $1+21z^{12}+21z^{14}+14z^{16}+7z^{18}$. Their automorphism groups have orders 14, 7, and 42, respectively.
     %and five $[28,6,10]$ binary LCD codes, up to equivalence. For the last five codes $M_2=\langle(e_2,x^ie_2,x^je_2,x^me_2)\rangle$ with $(i,j,m)=(0,1,2)$, (0,1,5), (0,1,6), (0,3,5), and (1,2,3).

     Now consider codes with an automorphism $\sigma$ with $f\ge 1$ fixed points.

\begin{itemize}
\item $f=1$, $k_\pi=1$) In this case $C_\pi$ is a binary LCD $[5,1,3$ or 5] code. Considering all possibilities for $C_\pi$, combined with the three $[28,6,12]$ codes presented above, we obtain 24 inequivalent $[29,7]$ LCD codes but none of them has minimum distance 12. Actually, we checked all possibilities including the codes $C_\pi$ for which the fixed points are 0's. This gives us that for any binary LCD code $C$ of length $28+f$, such that $\sigma\in\mathrm{Aut}(C)$ and $C_{\pi}$ contains a codeword of weight 3, the minimum weight $d$ is less than 12.

\item $f=2$, $k_\pi=2$) Now $C_\pi$ is a binary LCD $[6,2,\ge 2]$ code. Taking in mind the result in the previous case, we consider the following generator matrices for $C_\pi$:
    \[
    \left(\begin{array}{cc}
    1100&00\\
    1010&00\end{array}\right), \ \
    \left(\begin{array}{cc}
    1100&00\\
    1011&10\end{array}\right), \ \
    \left(\begin{array}{cc}
    1100&00\\
    1010&11\end{array}\right), \ \
    \left(\begin{array}{cc}
    1100&00\\
    1011&11\end{array}\right),
    \]
    and also all equivalent to these codes, obtained by permuting the four cycles. All constructed $[30,8]$ LCD codes have minimum distance $\le 10$. Moreover, if we take only the direct sum of the code $E_\sigma(C)$ and an $[c+f,1,2]$ code $F_\sigma(C)$, we obtain a code with minimum weight at most 10.
\end{itemize}

The presented constructions give the following result.

\begin{proposition}
There are three inequivalent $[28,6,12]$ LCD codes having an automorphism of order 7. No optimal LCD $[28+i,6+i,12]$ code for $i\ge 1$ has an automorphism of order 7.
\end{proposition}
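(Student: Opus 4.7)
The plan is to derive the proposition by combining the structural dimension identity with the case enumeration carried out in the preceding paragraphs. For any binary LCD $[28+i, 6+i, 12]$ code $C$ with an order-$7$ automorphism $\sigma$ of the form \eqref{perm} having $c=4$ cycles and $f=i$ fixed points, Lemma \ref{lem:Mj} together with the Corollary above yields $6+i = k_\pi + 6k_1$ and $k_1 = k_2$; since $k_\pi \le c + f = 4 + i$, the only admissible solution is $(k_\pi, k_1) = (i, 1)$. The orbit structures with $c < 4$ have to be excluded separately, for instance by noting that for $c \le 2$ the subcode $E_\sigma(C)^*$ sits in at most $14$ cyclic coordinates and cannot simultaneously meet the required dimension and minimum weight, while $c = 3$ falls to a similar argument supplemented by Lemma \ref{lem:hull}.

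For $i = 0$ I would reduce $M_1$ modulo the equivalences of Theorem \ref{thm:eq} to the two canonical representatives $\langle (e_1, e_1, e_1, 0)\rangle$ and $\langle (e_1, e_1, e_1, e_1)\rangle$, then for each enumerate the inequivalent one-dimensional $M_2$ over $I_2 \cong \F_8$ of length $4$. The pairs producing a binary minimum distance $12$ are exactly the codes exhibited in the paragraph just above; all three share the weight enumerator $1 + 21 z^{12} + 21 z^{14} + 14 z^{16} + 7 z^{18}$, and pairwise inequivalence is witnessed by the distinct automorphism-group orders $14$, $7$, and $42$.

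For $i \ge 1$, the cases $i = 1$ and $i = 2$ are exactly the $f = 1,\,k_\pi = 1$ and $f = 2,\,k_\pi = 2$ enumerations presented above, neither of which produced a $d = 12$ code; the general mechanism flagged there is that a weight-$3$ codeword in $C_\pi$ automatically forces a codeword of weight $<12$ in $C$. For $i \ge 3$ I would argue that each of the three candidate $E_\sigma(C)^*$ codes contains weight-$12$ codewords with cycle-weight pattern $(4,4,4,0)$ up to cycle permutation; any nonzero $C_\pi$-codeword $(a \mid b)$ with $\wt(a) \ge 1$ can be paired with such a $v$ so that $u + v \in C$ has cycle-weight at most $11$, which forces $\wt(b)$ to compensate and, combined with the LCD condition on $C_\pi$, excludes every possibility. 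The residual case of $C_\pi$ supported entirely on the fixed-point coordinates would force $C_\pi = \F_2^i$ (hence a weight-$1$ codeword in $F_\sigma(C)$), ruling it out as well.

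The main obstacle I foresee is giving the $i \ge 3$ step uniformly rather than case-by-case. The text only makes the enumerations for $i \le 2$ explicit and otherwise leans on Q-Extension; a closed-form proof for all $i$ would require isolating a short combinatorial lemma about the interaction between $F_\sigma(C)$-codewords and the weight-$12$ codewords of the three $E_\sigma(C)^*$ candidates, a step the paper leaves to computation.
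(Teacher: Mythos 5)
Your skeleton for $c=4$ and $i\le 2$ matches the paper's: the dimension identity $6+i=k_\pi+6k_1$ with $k_1=k_2$, the reduction of $M_1$ to the two representatives $\langle(e_1,e_1,e_1,0)\rangle$ and $\langle(e_1,e_1,e_1,e_1)\rangle$, the computational enumeration of $M_2$, inequivalence via automorphism group orders, and the exhaustive $f=1,2$ searches are all exactly what the paper does. The genuine gap is the step you yourself flag, $i\ge 3$, and it is worse than "not uniform": as sketched it does not close. First, you never bound $i$, so you are left with infinitely many cases; the paper caps the range at $i\le 6$ by citing the tables in \cite{optLCD}, according to which no binary LCD $[28+i,6+i]$ code with minimum weight $12$ exists for $i\ge 7$. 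Second, your proposed mechanism rests on a false premise for one of the three candidates: for the code with $M_1=\langle(e_1,e_1,e_1,e_1)\rangle$ and $M_2=\langle(e_2,xe_2,x^2e_2,x^5e_2)\rangle$, every nonzero codeword of $M_1$ and of $M_2$ is nonzero in all four coordinates, and since $I_1\cap I_2=\{0\}$ a cycle-restriction of $v_1+v_2$ vanishes only if both restrictions vanish; hence this $E_\sigma(C)^*$ has no weight-$12$ codeword of cycle-weight pattern $(4,4,4,0)$. Third, even where such a codeword $v$ exists, pairing it with $u=\pi^{-1}(a\mid b)$ only gives $\wt(u+v)=11+\wt(b)$ in the single configuration you describe, and $11+\wt(b)\ge 12$ is satisfied whenever $b\neq 0$, so no contradiction follows without substantially more case analysis.

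The paper's actual route for $3\le i\le 6$ is different and you should adopt it: since $E_\sigma(C)^*$ is one of the three $[28,6,12]$ codes, $C_\pi$ must be a $[4+f,f]$ LCD code, and by \cite{Harada-LCD} the optimal LCD $[7,3]$, $[8,4]$, $[9,5]$, $[10,6]$ codes all have minimum weight $3$; hence $C_\pi$ contains a codeword of weight at most $3$. The $f=1$ computation was deliberately run over all weight-$3$ patterns (including those meeting fixed points set to $0$) and established that any $C$ with $\sigma\in\mathrm{Aut}(C)$ whose $C_\pi$ contains a weight-$3$ codeword has $d<12$; the $f=2$ computation handles weight-$2$ codewords, and weight-$1$ codewords are immediate. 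So the $i\ge 3$ cases reduce entirely to computations already performed, rather than requiring the new combinatorial lemma you propose. One further small caution: your exclusion of $c\le 3$ tacitly assumes $\dim E_\sigma(C)^*=6$ (equivalently $k_1=k_2=1$); the paper makes the same assumption, but Lemma \ref{lem:hull} plays no role there --- the exclusion comes from the Grassl bounds on $[7,6]$, $[14,6]$ and $[21,6]$ codes.
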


\begin{proof}
According to Table 1 in \cite{optLCD}, binary LCD $[28+i,6+i,12]$ codes exist for $i=0,1,2$, but the existence of $[31,9,12]$, $[32,10,12]$, $[33,11,12]$, and $[34,12,12]$ LCD codes is not known. The largest possible minimum weight of a binary LCD $[28+i,6+i,12]$ code for $i\ge 7$ i less than 12.

If we take $c=1$, 2 or 3, the code $E_\sigma(C)*$ would be a binary $[7,6,\le 2]$, $[14,6\le k\le 12,\le 4]$, and $[21,6\le k\le 18,\le 8]$ code, respectively \cite{Grassl-table}. Hence $c=4$ and $E_{\sigma}(C)^*$ is equivalent to one of the constructed $[28,6,12]$ LCD codes. Take $f=i\le 6$. The cases $f=0$, 1, and 2 are considered above. Take $3\le f\le 6$. The $C_{\pi}$ is a binary $[4+f,f]$ LCD code. The optimal LCD $[7,3]$, $[8,4]$, $[9,5]$, and $[10,6]$ codes have minimum weight 3 \cite{Harada-LCD}, and according to the above computations, the corresponding LCD $[28+i,6+i]$ codes will have minimum weight less than 12.
\end{proof}

\subsection{$p=11$}

Since $\mathrm{ord}_{11}(2)=10$, in this case $\mathcal{P}$ is a field with $2^{10}$ elements. Its identity element is $e=x+x^2+\cdots+x^{10}$, and all nonzero elements can be written in the form $\alpha^i\beta^j\gamma^k$, $0\le i\le 30$, $0\le j\le 2$, $0\le k\le 10$, where $\alpha=x^9+x^2$, $\beta=x^{10}+x^8+x^7+x^6+x^2+1$, $\gamma=xe$, are elements of orders 31, 3, and 11, respectively.

We are looking for binary LCD codes of length $n\ge 33$ and minimum distance $d\ge 12$ having an automorphism $\sigma=(1,2,\ldots,11)(12,13,\ldots,22)(23,24,\ldots,33)$. Then $C_{\varphi}$ is an LCD $[3,k_{\varphi}]$ code over the field $\mathcal{P}$ which is an image of a binary LCD $[33, 10k_{\varphi},\ge 12]$ code. It turns out that $k_{\varphi}=1$ and
$$C_{\varphi}=\langle(e,\alpha^{i_1}\beta^{j_1},\alpha^{i_2}\beta^{j_2})\rangle, \ \ 0\le i_1,i_2\le 30, \ 0\le j_1,j_2\le 2, $$
such that $e+\alpha^{i_1}+\alpha^{i_2}\neq 0$.

Considering all possibilities for $i_1,j_1,i_2,j_2,$ we obtain exactly three inequivalent binary LCD $[33,10,12]$ codes $E_{\sigma}(C)^*$. These are the codes $C_{0,1,5,0}$, $C_{1,1,22,1}$, and $C_{5,0,10,0}$, constructed for the corresponding values of $(i_1,j_1,i_2,j_2)$. The first two codes have the same weight enumerator $1+132y^{12}+187y^{14}+242y^{16}+286y^{18}+110y^{20}+55y^{22}+11y^{24}$, and the weight enumerator of the third code is
$1+99y^{12}+275y^{14}+209y^{16}+198y^{18}+187y^{20}+55y^{22}$.

According to our construction, LCD $[33,11,12]$ codes with an automorphism of order 11 do not exist, but there are two optimal LCD $[35,11,12]$ codes, obtained from $C_{0,1,5,0}$ with $C_{\pi}=\langle (00111)\rangle$ and $C_{5,0,10,0}$ with $C_{\pi}=\langle (11111)\rangle$.

\subsection{$p=17$}

In this case the module $\mathcal{P}$ is a direct sum of two fields with 256 elements each, i.e. $\mathcal{P}=G_1\oplus G_2$ where $G_i=\langle g_i(x)\rangle$, $i=1,2$, $g_1(x)= 1+x+x^3+x^6+x^8+x^9$ and $g_2(x) = 1 + x^3 + x^4 + x^5 + x^6 + x^9$ \cite{Yankov_p17}. The nonzero idempotents of these ideals are $e_1 = x + x^2 + x^4 + x^8 + x^9 + x^{13} + x^{15} + x^{16}$ and $e_2 = x^3 + x^5 + x^6 + x^7 + x^{10} + x^{11} + x^{12} + x^{14}$, respectively.  The element $\delta=g_1(x)^{17}=x^3 +x^7 +x^8 +x^9 +x^{10} +x^{14}\in G_1$ has order 17, and so $G_1=\{0, x^i\delta^j | 0\le i\le 16, 0\le j\le 14\}$. Similarly, we take $\tau=g_2(x)^{17}= x + x^3 + x^8 + x^9 + x^{14} + x^{16}\in G_2$ as an element of order 15, and consider $G_2=\{0, x^i\tau^j | 0\le i\le 16, 0\le j\le 14\}$.

In this case we take $c=2$ and $\sigma=(1,2,\ldots,17)(18,19,\ldots,34)$. Then $C_{\varphi}=M_1^{(17)}\oplus M_2^{(17)}$, where $M_i^{(17)}$ is an LCD code over $G_i$ of length 2.

\begin{itemize}
\item $k_1=1$, $k_2=0$) Now $M_1^{(17)}=\langle(e_1,x^i\delta^j)\rangle$. We are looking for LCD codes $C$ with minimum weight $d\ge 13$. Considering the possibilities for $j$ we obtain that only for $j=5$ and 10 the corresponding code $M_1^{(17)}$ has the needed minimum weight. According to Theorem \ref{thm:eq}, the codes $E_\sigma(C)^*$ for all values of $i$ and for $j=5$ and $10$ are equivalent to one binary LCD $[34,8,14]$ code. This code is optimal, its automorphism group has order 136 and its weight enumerator is $1+68y^{14}+68y^{16}+68y^{18}+34y^{20}+17y^{24}$.

    Further, we use the binary LCD $[f+2,k_{\pi}=1]$ codes $C_{\pi}$. The minimum weights of the cosets $(\underbrace{11\ldots 1}_{17}\underbrace{00\ldots 0}_{17})+E_\sigma(C)^*$, $(\underbrace{00\ldots 0}_{17}\underbrace{11\ldots 1}_{17})+E_\sigma(C)^*$, and $(11\ldots 1)+E_\sigma(C)^*$ are 13, 13, and 10, respectively. We obtain two optimal LCD codes: (1) the $[34,9,13]$ code $C_{34,9}$ with weight enumerator $1+51y^{13}+68y^{14}+68y^{15}+68y^{16}+18y^{17}+68y^{18}+68y^{19}+34y^{20}+51y^{21}+17y^{24}$, and (2) the $[36,9,14]$ code $C_{36,9}$ with weight enumerator $1+68y^{14}+51y^{15}+68y^{16}+68y^{17}+68y^{18}+18y^{19}+34y^{20}+68y^{21}+51y^{23}+17y^{24}$. Both codes have automorphism groups of order 68. We have made some computations for larger values of $k_{\pi}$ and $f$ but the constructed LCD codes are not optimal.

\item $k_1=k_2=1$) We are looking for LCD codes $C$ with minimum weight $d\ge 8$. After computing the minimum distances and checking for equivalence, from all codes $C_{\varphi}=M_1^{(17)}\oplus M_2^{(17)}$ with $M_1^{(17)}=\langle(e_1,\delta^{j_1})\rangle$ and $M_2^{(17)}=\langle(e_2,x^i\delta^{j_2})\rangle$, we obtain exactly 10 $[34,16,8]$ LCD codes with 7 different weight enumerators. These codes are not optimal as linear codes but as far as we know they are the first constructed LCD $[34,16]$ codes with minimum weight $>7$. The values of the parameters $j_1$, $j_2$ and $i$, as well as their weight enumerators, are given in Table \ref{Table_34_16}. Adding a $[2,1,1]$ LCD code $C_{\pi}$, we obtain three optimal $[34,17,8]$ LCD codes. We present some of their characteristics in Table \ref{Table_34_17}.
\end{itemize}

\begin{table}\label{Table_34_16}
\small
\begin{center}
\begin{tabular}{|c|ccc|l|}
  \hline
  % after \\: \hline or \cline{col1-col2} \cline{col3-col4} ...
  &$j_1$&$j_2$&$i$&Weight enumerator\\
  \hline
 1&0&0&1&$1+153y^{8}+952y^{10}+4369y^{12}+\cdots+935y^{24}+136y^{26}+17y^{28}$\\
 2&0&0&2&$1+153y^{8}+952y^{10}+4369y^{12}+\cdots+935y^{24}+136y^{26}+17y^{28}$ \\
 3&0&0&3&$1+170y^{8}+918y^{10}+4318y^{12}+\cdots+884y^{24}+102y^{26}+34y^{28}$ \\
 4&0&0&5&$1+153y^{8}+952y^{10}+4369y^{12}+\cdots+935y^{24}+136y^{26}+17y^{28}$ \\
 5&0&0&7&$1+119y^{8}+1071y^{10}+4284y^{12}+\cdots+901y^{24}+187y^{26}$ \\
 6&3&0&0&$1+187y^{8}+884y^{10}+4267y^{12}+\cdots+833y^{24}+68y^{26}+51y^{28}$\\
 7&3&0&2&$1+204y^{8}+918y^{10}+4148y^{12}+\cdots+1054y^{24}+102y^{26}$\\
 8&3&0&3&$1+153y^{8}+901y^{10}+4556y^{12}+\cdots+1071y^{24}+153y^{26}$ \\
 9&3&0&6&$1+204y^{8}+918y^{10}+4148y^{12}+\cdots+1054y^{24}+102y^{26} $\\
 10& 5&0&0&$1+102y^{8}+1156y^{10}+4148yz^{12}+\cdots+816y^{24}+204y^{26}$\\
  \hline
\end{tabular}
\vspace{0.3cm} \\
\end{center}
\caption{LCD $[34,16,8]$ codes}
\end{table}

\begin{table}\label{Table_34_17}
\small
\begin{center}
\begin{tabular}{|c|ccc|c|l|}
  \hline
  % after \\: \hline or \cline{col1-col2} \cline{col3-col4} ...
  &$j_1$&$j_2$&$i$&$C_{\pi}$&Weight enumerator\\
  \hline
 1&0&0&3&(01)&$1+170y^{8}+527y^{9}+918y^{10}+\cdots+391y^{25}+102y^{26}+34y^{27}+34y^{28}$\\
 2&3&0&0&(10)&$1+187y^{8}+493y^{9}+884y^{10}+\cdots+493y^{25}+68y^{26}+51y^{28}$ \\
 3&3&0&6&(01)&$1+204y^{8}+442y^{9}+918y^{10}+\cdots+374y^{25}+102y^{26}+34y^{27}+17y^{29} $ \\
   \hline
\end{tabular}
\vspace{0.3cm} \\
\end{center}
\caption{LCD $[34,17,8]$ codes}
\end{table}

The results in this section show that the presented method is a powerful tool for constructing optimal binary LCD codes with different parameters.

\section*{Acknowledgements}

The research of S. Bouyuklieva was supported by a Bulgarian NSF contract KP-06-N32/2-2019.

\end{document}